        \renewcommand*{\backref}[1]{}
        \renewcommand*{\backrefalt}[4]{%
            \ifcase #1 {\footnotesize(Not cited.)}%
            \or        {\footnotesize(Cited on page~#2.)}%
            \else      {\footnotesize(Cited on pages~#2.)}%
            \fi}
	\newcommand{\N}{\mathcal{N}}
	\newcommand{\E}{\textnormal{E}}
	\newcommand{\Var}{\textnormal{Var}}
    \renewcommand{\P}{\textnormal{P}}
    \newcommand{\R}{\mathbb{R}}
    \def\blfootnote{\xdef\@thefnmark{}\@footnotetext}
    \title{On the Fairness of Machine-Assisted Human Decisions}
    \author{
        Talia Gillis \\ {\small Columbia University}
        \and
        Bryce McLaughlin \\ {\small Stanford University}
        \and
        Jann Spiess \\ {\small Stanford University}
    }
    \date{
    First version: October 2021 \\
    This version: September 2023}
\begin{document}

    \maketitle

    \begin{abstract}
        When machine-learning algorithms are used in high-stakes decisions, we want to ensure that their deployment leads to fair and equitable outcomes. This concern has motivated a fast-growing literature that focuses on diagnosing and addressing disparities in machine predictions. However, many machine predictions are deployed to assist in decisions where a human decision-maker retains the ultimate decision authority. In this article, we therefore consider in a formal model and in a lab experiment how properties of machine predictions affect the resulting human decisions. In our formal model of statistical decision-making, we show that the inclusion of a biased human decision-maker can revert common relationships between the structure of the algorithm and the qualities of resulting decisions. Specifically, we document that excluding information about protected groups from the prediction may fail to reduce, and may even increase, ultimate disparities. In the lab experiment, we demonstrate how predictions informed by gender-specific information can reduce average gender disparities in decisions. While our concrete theoretical results rely on specific assumptions about the data, algorithm, and decision-maker, and the experiment focuses on a particular prediction task, our findings show more broadly that any study of critical properties of complex decision systems, such as the fairness of machine-assisted human decisions, should go beyond focusing on the underlying algorithmic predictions in isolation. 
    \end{abstract}
    
    \blfootnote{
    Talia Gillis (\href{mailto:gillis@law.columbia.edu}{gillis@law.columbia.edu}), Columbia Law School;
    Bryce McLaughlin (\href{mailto:brycem@stanford.edu}{brycem@stanford.edu}) and Jann Spiess (\href{mailto:jspiess@stanford.edu}{jspiess@stanford.edu}), Graduate School of Business, Stanford University.
    We thank Ryan Bubb, Jonah Gelbach, Hoda Heidari, Peter Hull, Daniel Kipnis, Tamar Kricheli-Katz, Sendhil Mullainathan, Mandy Pallais, Asa Palley, Ashesh Rambachan, Stefan Wager, Larry Wein, as well as audiences at the 2021 INFORMS annual meeting, the 2022 FAccT conference, the 2022 MSOM conference, the 2022 Annual Meeting of the American Law and Economics Association (ALEA), the NYU Law \& Economics Workshop, the ETH Zurich and University of Zurich Law \& Economics Workshop, the 2022 Bravo Center Workshop on the Economics of Algorithms at Brown, UW--Madison, Stanford, the 2023 Cowles Conference on Econometrics, the 2023 NBER Summer Institute, and the 2023 Berlin ``Artificial Intelligence and the Economy'' conference for helpful discussions, comments, and suggestions.
    This manuscript supersedes an earlier version that was accepted and presented at FAccT'22, with an abstract published as: Gillis, Talia, Bryce McLaughlin, and Jann Spiess (2022). On the Fairness of Machine-Assisted Human Decisions. In \emph{Proceedings of the 2022 ACM Conference on Fairness, Accountability, and Transparency (FAccT'22)}, page 890.
    }

    \newpage

    \section{Introduction}
    
    When we analyze the properties of machine-learning predictions, we often consider settings in which predictions are automatically translated into decisions.
    In this article, we instead consider how properties of a machine-learning algorithm affect a human decision-maker's choices that take these predictions as an input.
    Focusing on a measure of disparity with respect to the resulting decisions, we document cases in which the interaction of a biased decision-maker with a machine-learning algorithm alters usual trade-offs between fairness and accuracy.
    
    Machine-learning tools are increasingly used in high-stakes decisions.
    Algorithms that predict recidivism are employed in pre-trial bail decisions.
    In medicine, machine-learning predictions are used to make testing decisions.
    In hiring, predictive algorithms screen applicants in order to make interview decisions.
    Lenders use sophisticated statistical models to make default predictions that underlie credit approval decisions.
        
    When machines are used in such high-risk contexts, we often care about their properties beyond accuracy.
    A growing literature specifically studies the fairness properties of machine decisions.
    In medical applications we may take special interest in when an algorithm errs, examining the incidence of false positives and false negatives \citep{mullainathan_diagnosing_2019}.
    In decisions subject to heightened legal oversight such as hiring decisions, we may care about differentiated treatment of or impact on legally designated protected classes
   \citep{raghavan_mitigating_2020}.
    \cite{chouldechova_fair_2016,kleinberg_inherent_2016} study how different fairness criteria apply to machine classifications.
    
    Typically, fairness properties of algorithmic decisions are analyzed as if the machine predictions were implemented directly.
    But in many cases, machine predictions may not be implemented automatically, and instead inform a human decision-maker who has decision-making authority.
    This is the case for recommender systems or decision support systems.
    The human is sometimes considered vital in these systems due to their domain knowledge \citep{lawrence_judgmental_2006}, which encompasses a human decision-maker's intuition and understanding of a problem as well as any additional information they may observe. We may also require that a human make the final decision in a system as a matter of accountability or comfort. In fact, the recent proposal of the European Union for the regulation of artificial intelligence, the most comprehensive and ambitious proposal for the regulation of artificial intelligence to date, requires that humans retain authority when algorithms are used for decision-making in high-risk domains \citep{EU_AI_2021}. 

    In this article, we use a formal model and a lab experiment to ask how questions around fairness and bias play out when machine predictions support rather than replace human decisions, and analyze how incorporating a human decision-maker impacts the relationship between the structure of (machine) predictions and the properties of resulting (human) decisions. In our model, we consider a decision that aims to predict a label from features with minimal prediction loss.
    At the time of the decision, the decision-maker has access to a machine prediction of the label of interest. This prediction is derived from a training data-set that the decision-maker does not have direct access to. This setup could represent a judge's bail decision for which a risk score may be available, or a loan officer's approval decision with the help of a credit score. In our lab experiment, study subjects were incentivized to evaluate the math performance of test-takers based on demographic information as well as an algorithmic assistant that provided summary information from a training sample of similar test-takers.
    
    We focus on the fairness properties of the resulting decision in our model and experiment, and analyze possible trade-offs with accuracy. We capture fairness by the \textit{disparity} that encodes how the decision varies between instances that differ in their membership to a (protected) group. We measure disparity as we may want decisions not to differ between groups, holding other characteristics constant.
    We contrast disparity with \textit{accuracy} (expected loss) of the decision.
    By focusing on conditional statistical parity, we consider a specific notion of fairness that is directly related to the inclusion of protected characteristics in the decision.
    We discuss other notions of fairness in the extension section.
    
    In a baseline case of our theoretical model, where machine predictions are implemented directly, excluding protected characteristics reduces disparities, conditional on all other features. Thus, when the labels differ across groups in the data, the objectives of minimizing risk and minimizing disparity may be in conflict. This presents us with a trade-off to consider when choosing which inputs to include in the algorithm.
    
    Having established the baseline case, we switch to a human decision-maker who uses the machine prediction to inform their decision.
    Adopting a standard model from statistical decision theory, the human decision-maker updates their prior beliefs using the information contained in the machine predictions.
    Thus the role of the predictions moves from taking action to informing the decision-maker through a concise summary (the prediction itself). Changing the inputs available to the prediction now changes the information presented to the decision-maker.
        
    We consider a human decision-maker who starts with biased beliefs about the differences in true labels across protected groups.
    Prior work studies and provides evidence of the role of biased beliefs in discrimination  \citep{coffman_role_2021,bohren_inaccurate_2019}.
    In such cases, the decision-maker believes that differences between groups are larger than they are on average in the training data available to the algorithm.
    The design of the algorithm may then have implications for how the human decision-maker updates their beliefs about the differences between groups before making a decision.

    When the human has a biased prior, common relationships between the inclusion of protected characteristics and the disparities in the final decision may revert.
    We show formally how the decisions of a biased decision-maker may have \textit{larger} disparity when the machine predictions exclude protected characteristics.
    The intuition behind this result is straightforward: if the machine input is not informative about the differences between groups, the prior bias of the human decision-maker remains unmitigated.
    We show this reversal in a stylized example and then provide general theorems that apply in large samples.
    Our results therefore show that, in the case of algorithmic assistance, some of the common trade-offs between fairness and accuracy do not apply.

    We then use an experiment to test how machine-assisted human decisions vary with the type of information presented to decision-makers. In the experiment, study subjects were asked to evaluate the math performance of test-takers based on demographic information (age, gender, education) as well as summary information on the average score of similar test-takers. This average plays the role of an algorithmic assistant in our experiment. Participants were randomized into one of six conditions, which varied by the subsample of test-takers used to calculate the average provided to the participant. The main intervention was variation over whether the assistant was an average calculated for a subsample of men and women (control), or an average calculated separately for men and women (treatment). By randomly varying whether the assistant was gender-blind or gender-aware, we are able to analyze the causal effect of the inclusion of gender information on predictions by our study subjects. The design of the experiment is meant to mirror our model in which the algorithm given to the human decision-maker can be either gender-blind or gender-aware. 

    Preliminary results from our experiment show that the average difference in evaluations for women and men is lower when subjects receive an assistant that is gender-aware. As our main outcome of interest, we consider the difference in the average evaluations for female test-takers and male test-takers in a weighted sample that is meant to reflect the composition of profiles given to experiment participants in the population of test-takers from which these profiles were drawn. For all treatment groups, the disparity in average evaluations for women and men (calculated as the average for women minus the average for men) is negative when the assistant does not include gender, meaning that participants on average predict that men perform better on the math test than women within the weighted sample. However, when the assistant includes gender, the difference in average evaluations between women and men is indistinguishable from zero. 

    We view these results as consistent with our model. When participants receive information that is gender-blind, they estimate that women perform worse on the math test than men, averaged over the test-taker distribution. In reality, women perform better on the math test, so when participants receive an average that is calculated separately for men and women, they adjust their performance estimates for women upwards. This means that disparities of estimates for men and women are lower when information is gender-specific. Based on our preliminary analysis, these results seem to be driven by participants failing to understand gender differences in the relationship between education, age, and math ability, rather than by participants evaluating female test-takers as worse than men with similar profiles. Our results on differences in evaluations across genders thus draw a nuanced picture of the role of including gender in the algorithmic assistant. While the overall results are in line with our model prediction, they are driven by an (implicit) misunderstanding of gender differences in the relationship between education and performance, rather than by outright (explicit) bias against women with otherwise similar profiles.

    Our theoretical and experimental results suggest that we have to take the structure of decisions and beliefs into account when relating machine predictions to the (fairness) properties of algorithms.
    While our analysis shows that typical intuitions about the effect of including protected-class information may not apply to machine-assisted human decisions, there may be other reasons to avoid protected characteristics that we do not capture in our model and that may alter conclusions about their inclusion.
    Importantly, our analysis demonstrates that any such conclusion should take the specific way in which predictions are to be used in the decision-making system into account.
    More broadly, we highlight the importance of analyzing the impact of machine predictions in the context of the full decision environment, which includes the way in which (possibly biased) data informs machine predictions and how humans (with 
    possibly biased priors or preferences) use these predictions.

    We contribute to an interdisciplinary literature on algorithmic fairness that spans computer science, statistics, economics, law, and operations.
    \cite{kleinberg_inherent_2016,chouldechova_fair_2016} study tensions between different fairness qualities of algorithms. 
    \cite{bastani2021improving}, among others, considers the interaction of machine predictions and human decisions.
    \cite{dietvorst_overcoming_2018,stevenson_algorithmic_2019, Ludwig2021-gq} document frictions in the adoption of machine predictions. 
    More specifically, we relate to a literature that explicitly considers fairness properties of machine-assisted human decisions, including \cite{Morgan2019-td,green_disparate_2019}.
    The article is also related to an economics literature on fairness and discrimination, which attempts to identify sources of biased decisions and distinguishes between preference- and belief-based explanations \citep{bordalo_beliefs_2019,bohren_inaccurate_2019,bohren_dynamics_2019,coffman_role_2021,chan_discrimination_2022,eyting_why_2022}.
    In terms of our model, we relate to approaches in econometrics and economics to consider the communication of statistical results \citep{andrews2021model} and the strategic design of information \citep{kamenica_bayesian_2011}.
    We review the overall literature in more detail at the end of this article.
    
    The remainder of this article proceeds as follows. \autoref{sec:setup} introduces our model. \autoref{sec:example} demonstrates our main results in a stylized example before we generalize these results in \autoref{sec:mainresults}.
    Preliminary experimental results are presented in \autoref{sec:experiment}.    
    In \autoref{sec:implications} we discuss some implications of our result.
    Finally, we consider extensions in \autoref{sec:extensions} before expanding on our literature review in \autoref{sec:literature} and concluding in \autoref{sec:conclusion}.
    
    \section{Setup}
    \label{sec:setup}

    This section lays out a model of machine-assisted human decision-making.
    Specifically, we consider a simple prediction decision taken by a decision-maker.
    For an instance $(X,G)$, we consider personalized decisions $\hat{Y} = d(X,G)$ that may vary by features $X$ and group identity $G$. Here, features $X$ may comprise baseline characteristics available at the time the decision is taken, and the group identity $G$ may encode additional sensitive attributes such as gender or ethnicity/race that may be subject to legal constraints or ethical considerations.

    The decision-maker aims to make decisions with small prediction loss.
    A decision $\hat{Y} = d(X,G)$ leads to loss $\ell(\hat{Y},Y)$, where $Y$ is the true label of the instance $(X,G)$ that is unavailable at the time of the decision. For simplicity, we will focus here on a simple prediction decision $\hat{Y} = d(X,G) \in \R$ relative to a true label $Y \in \R$ with squared-error loss $\ell(\hat{Y},Y) = (\hat{Y} - Y)^2$.
    We can think of this choice as an explicit prediction decision or an implicit assessment where loss approximates a consequence of the associated decision. 
    We briefly mention binary decisions in \autoref{sec:extensions}.

    In order to describe the consequences of different decision policies, we make assumptions on the distribution of outcome labels.
    We assume that the true label $Y|X=x,G=g$ has mean $\mu(x,g)$ and variance $\sigma_{x,g}^2$. For tractability, we assume here that the variance is finite, constant, and known ($\sigma_{x,g}^2 \equiv \sigma^2 > 0$), that the error term $Y - \mu(x,g)|X=x,G=g$ is Normally distributed, that $X$ has finite support, and that $G$ is binary.
    The expected (out-of-sample) loss (risk) of the decision rule $d$ is $\bar{r}_d = \E[\ell(d(X,G),Y)] = \E[(d(X,G) - \mu(X,G))^2] + \sigma^2$.
    Writing $r^0_d(x,g) = \E[\ell(d(X,G),Y)|X=x,G=g]$ for the risk at $(x,g)$ and $r_d(x) = \E[\ell(d(X,G),Y)|X=x]$ for the average risk at $x$, we have that $\bar{r}_{d} = \E[r^0_d(X,G)] = \E[r_d(X)]$. The risk expresses the accuracy of the decision.
    
    In addition to the accuracy of the decision, we may also care about how the decision treats instances $(X,G=1), (X,G=0)$ differently that vary only by their group identity.
    We define $\Delta_d(x) = d(x,1) - d(x,0)$ as the decision disparity of decision rule $d$ between group 1 and group 0, and write $\bar{\Delta}_d = \E[\Delta_d(X)]$ for its average across instances.
    
    Disparities $\Delta_d(x)$ may be of interest when there are legal or ethical grounds for not treating people differently who only differ in their group membership. This may be true when the ground-truth discrepancies $\Delta_\mu(x)$ are zero, and we are worried that an unfair decision introduces biases. But even when the true discrepancies are not zero -- which could be due to omitted variables, different outcomes that reflect past or future discrimination, or a causal relationship
    -- we may want to ensure that discrepancies are small to address biased data, correct institutional discrimination, or comply with legal restrictions.
    By considering disparities, we focus on conditional statistical parity as our notion of fairness; we briefly discuss extensions to accuracy-based fairness concepts in \autoref{sec:extensions}.
    
    We assume that the decision $d(x,g)$ is taken by a decision-maker with decision authority who has a belief (prior) $\pi$ over the mean vector $(\mu(x,g))_{x,g}$.
    This belief incorporates any ex-ante beliefs and prior data the decision-maker may have observed.
    In addition to this belief, the decision-maker also observes a machine prediction $\hat{f}(X,G)$, which for simplicity we assume comes from training data of iid draws $(Y_i,X_i,G_i)$ that are independent of the deployment data and independent of the decision-maker's prior (and not available to the decision-maker directly).
    Specifically, we consider the two predictions
    \begin{align*}
        \hat{f}_-(x) &= \frac{1}{n(x,1) + n(x,0)} \sum_{X_i=x} Y_i,
        &
        \hat{f}_+(x,g) &= \frac{1}{n(x,g)} \sum_{X_i=x, G_i=g} Y_i,
    \end{align*}
    where $n(x,g)$ denotes the number of observations $i$ with $X_i=x,G_i=g$.
    The first prediction does not vary with group identity, while the second one does.
    We discuss expanding to more complex functional forms in \autoref{sec:extensions} below, but for now focus on these simple averages, which allow us to model learning by the agent.
    
    We consider three decisions taken by the decision-maker to minimize average expected loss.
    First, we consider the decisions $d_0$ that the decision-maker would take without access to any prediction.
    Second, we consider the decision $\hat{d}_-$ the decision-maker would take when given access to the prediction $\hat{f}_-(X)$ for the instance $(X,G)$.
    Finally, we consider the analogous decision $\hat{d}_+$ given the more detailed prediction $\hat{f}_+(X,G)$.
    We assume that the decision-maker minimizes expected loss $\ell(X,G)$, averaged over her prior.%
    \footnote{We assume that the decision-maker observes a single instance at a time, and at this point takes into account only the prediction for this specific instance. In principle, the decision-maker could also learn from other predictions; however, we assume that the structure of the full prediction function is too complex for or not available to the decision-maker, and the decision-maker solely updates based on $\hat{f}(X,G)$. We discuss extensions in \autoref{sec:extensions}.}
    The three decisions are therefore
    \begin{align*}
        d_0(x,g) &= \E_\pi[\mu(x,g)],
        &
        \hat{d}_-(x,g) &= \E_\pi[\mu(x,g)|\hat{f}_-(x)],
        &
        \hat{d}_+(x,g) &= \E_\pi[\mu(x,g)|\hat{f}_+(x,g)].
    \end{align*}
    Note that $\hat{d}_-(x,g),\hat{d}_+(x,g)$ are random variables since they depend on the training data, even for fixed $x,g$.
    
    We are interested in the relative accuracy and disparities of these three decisions of the decision-maker, and compare them to applying the machine predictions directly.
    While not explicitly discussed in the results below, combining human priors with machine predictions can improve the accuracy of decisions by combining information even when not enforced by institutional constraints.
    In the main part of this article, however, we take human decision authority as given and focus on comparing $d_0,\hat{d}_-,\hat{d}_+$, rather than whether one should or should not delegate a given decision to a human decision-maker, a machine, a human-assisted machine, or a machine-assisted human.
    
    \section{Illustration in a simple example}
    \label{sec:example}

    Having laid out a model of machine-assisted decision-making, we are interested in comparing the consequences of different types of machine assistance.
    We begin by considering a simple example in which we focus on instances with $X=x$ fixed (so that the only variation is the group membership $G$), so for simplicity, we drop the $x$ subscripts and inputs in this section. For this simple example, we assume
    \begin{itemize}
        \item $Y|G = g \sim \N(\mu(g),\sigma^2)$  drawn independently upon realization of $G$, and write $\Delta_\mu = \mu(1) - \mu(0)$ and $\bar{\mu} = \frac{\mu(1) + \mu(0)}{2}$;
        \item the decision-maker holds a prior belief $\pi$ that $\mu(g) \sim \N(\beta(g),\tau^2)$, that these distributions are independent across $g$, and write $\delta = \beta(1) - \beta(0)$ and $\bar{\beta} = \frac{\beta(1) + \beta(0)}{2}$;
        \item the group distribution is $\P(G=1) = \frac{1}{2}$;
        \item we have training data with $n(1)= \frac{n}{2} = n(0)$.
    \end{itemize} 
    In \autoref{sec:mainresults}, we will drop these distributional assumptions about beliefs and generalize the main insights from the example.
    
    Under these assumptions,
    the machine predictions take a particularly simple form, namely
    \begin{align*}
        \hat{f}_+(g) &= \frac{2}{n} \sum_{G_i=g} Y_i,
        &
        \hat{f}_- &= \frac{1}{n} \sum_{i=1}^n Y_i,
    \end{align*}
    and the decision-maker's optimal decisions are
    \begin{align*}
        d_0(g) &= \E_\pi[\mu(g)] = \beta(g)
        = \bar{\beta} + \frac{2g-1}{2} \delta,
        \\
        \hat{d}_-(g) &= \E_\pi[\mu(g)|\hat{f}_-] = \frac{ \sigma^2 \bar{\beta} {+} \frac{\tau^2}{2} \sum_{i=1}^n Y_i}{\sigma^2 + n \frac{\tau^2}{2}} + \frac{2g-1}{2} \delta,
        \\
        \hat{d}_+(g) &= \E_\pi[\mu(g)|\hat{f}_+(g)] = 
        \frac{\sigma^2 \beta(g) {+} \tau^2 \sum_{G_i=g} Y_i}{\sigma^2 + \frac{n}{2} \tau^2}
        .
    \end{align*}
    The decisions $\hat{d}_-$ and $\hat{d}_+$ have an intuitive structure. When given access to $\hat{f}_+(g)$ the decision-maker combines their prior of $\mu(g)$ with a normal signal of $\mu(g)$ whose accuracy is a function of sample size and prior variance. We can find the resulting posterior expectation $\hat{d}_+(g)$ through inverse-variance weighting.
    The decision-maker who only observes $\hat{f}_-$, on the other hand,  updates about the average $\bar{\mu} = \frac{\mu(1) + \mu(0)}{2}$ (first term), but not about the difference $\Delta_\mu = \mu(1) - \mu(0)$ (second term), which is orthogonal in this specific example.

    \newcommand{\greyrule}{    \specialrule{0.25pt}{1pt}{1pt}}
    
    \begin{table}[h]
        \centering
        \begin{tabular}{rcc}
        \toprule
            ${\hat{d}}$ & $\E[\Delta_{\hat{d}}]$ & $\E[r_{\hat{d}}]$ \\
            \midrule
            $\hat{f}_-$ & $0$ & $\frac{\Delta_\mu^2}{4}+\sigma^2 \left(1+\frac{1}{n}\right)$\\
            $\hat{f}_+$ & $\Delta_\mu$ & $\sigma^2 \left(1+\frac{2}{n}\right)$\\
             \midrule
            $d_0$ & $\delta$ & $(\bar{\mu} {-} \bar{\beta})^2 + \frac{(\Delta_\mu {-} \delta)^2}{4} + \sigma^2$ \\
            $\hat{d}_-$ & $\delta$ &
            $
                \frac{\sigma^4 (\bar{\mu} {-} \bar{\beta})^2}{(\sigma^2 + \frac{n}{2} \tau^2)^2} + \frac{(\Delta_\mu {-} \delta)^2}{4} 
                + \sigma^2 \left(1 + \frac{n \tau^4}{4(\sigma^2 + \frac {n}{2} \tau^2)^2}\right)
            $
            \\
            $\hat{d}_+$ & $\frac{\sigma^2 \delta + \frac{n}{2}\tau^2\Delta_\mu}{\sigma^2  + \frac{n}{2}\tau^2 }$ &
            $
                \frac{\sigma^4 \left((\bar{\mu} {-} \bar{\beta})^2 + \frac{(\Delta_\mu {-} \delta)^2}{4} \right)}{\left(\sigma^2 + \frac{n}{2} \tau^2\right)^2} 
                 + \sigma^2 \left(1 + \frac{n \tau^4}{2(\sigma^2 + \frac{n}{2} \tau^2)^2}\right)
            $
                        \\
             \bottomrule
        \end{tabular}
        \caption{Disparities and risks in the example}
        \label{tab:example}
    \end{table}

    For this example, \autoref{tab:example} lists the resulting expected risks and disparities, which are both taken as averages over the training sample given the true values of $\mu(1),\mu(0)$ and at the given $X=x$, in terms of ground truth $\Delta_\mu,\bar{\mu}$, beliefs $\delta,\bar{\beta}$, noise $\sigma^2$, prior variance $\tau^2$, and sample size $n$.
    We first summarize the features of the example in terms of expected disparities; we will generalize these findings and drop expectations in the next section below.
    
    \begin{restatable}[Disparity reversal in the example]{rem}{remexamplereversal}
        \label{rem:examplereversal}
        If the human decision-maker ex-ante believes that the disparity is larger than it actually is, $\delta > \Delta_\mu \geq 0$, then
        \[
            \Delta_{\hat{d}_-}=\Delta_{d_0} = \delta > \E[\Delta_{\hat{d}_+}],
        \]
        while for the underlying predictions
        \(
            \Delta_{\hat{f}_-} = 0 \leq \E[\Delta_{\hat{f}_+}].
        \)
    \end{restatable}
    
    Hence, in this example, the usual ordering of disparities is reversed.
    Under the automated approach that implements $\hat{f}_-,\hat{f}_+$ directly, inclusion of $g$ allows for differentiated action. Thus $\hat{f}_-$ will not exhibit disparity in the actions it takes while $\hat{f}_+$ will on average separate its decisions by $\Delta_\mu$.
    Since the human decision-maker does not update about disparities when given access only to $\hat{f}_-$, the prior disparity persists, leading to excess disparity.
    Giving the biased decision-maker access to $\hat{f}_+$, on the other hand, reduces disparities relative to the unaided decision-maker ($d_0$) as well as the decision-maker who only sees $\hat{f}_-$.
    
    We next inspect the relationship between disparities and accuracy. Fairness and accuracy are sometimes seen as representing a trade-off when considering the inclusion of protected characteristics; we show that this trade-off may disappear under our model of machine-assisted human decisions, and already for pure automation will depend on the bias--variance trade-off in training.
    
    \begin{restatable}[Trade-off reversal in the example]{rem}{remexampletradeoff}
        \label{rem:exampletradeoff}
        For $\xi = \frac{2\sigma}{\sqrt{n}}$ and
        $\delta > \Delta_\mu + \frac{2\tau\sigma}{\sqrt{n\tau^2 + 4\sigma^2}}$, we find that:
    \begin{enumerate}

            \item \ul{Trade-off regime.}    If $\Delta_\mu > \xi$ then there is a disparity--accuracy trade-off for automation,
            \begin{align*}
                |\Delta_{\hat{f}_+}(x)| &> |\Delta_{\hat{f}_-}(x)| \text{ a.s.},
                &
                \E[r_{\hat{f}_+}] &< \E[r_{\hat{f}_-}],
            \end{align*}
            which disappears for assistance,
            \begin{align*}
                \Delta_{\hat{d}_+}(x) &< \E[\Delta_{\hat{d}_-}(x)],
                &
                \E[r_{\hat{d}_+}] &< \E[r_{\hat{d}_-}].
            \end{align*}
            
            \item \ul{Dominance regime.} If $0 \leq \Delta_\mu < \xi$ then there is no trade-off for either decision, and
            \begin{align*}
                |\Delta_{\hat{f}_+}(x)| &> |\Delta_{\hat{f}_-}(x)| \text{ a.s.},
                &\E[r_{\hat{f}_+}] & > \E[r_{\hat{f}_-}],
                \\
                \Delta_{\hat{d}_+}(x) &< \E[\Delta_{\hat{d}_-}(x)],
                &
                \E[r_{\hat{d}_+}] &< \E[r_{\hat{d}_-}].
            \end{align*}
        \end{enumerate}
        
    \end{restatable}
    
    When the sample size is large enough (and thus $\xi$ is small) relative to the disparity in the data, including the group identity in prediction makes the automated decision more accurate at the cost of disparity, while the exclusion of information makes both disparity and accuracy worse for assistance.
    If true disparities are very small, then even for automation there is no trade-off between these two goals.
    By excluding group information, the automated rule reduces variance more than it increases loss due to (statistical) bias.
    In cases where the prior disparity is very close to the true disparity in the data in the sense that $\Delta_\mu < \delta < \Delta_\mu + \frac{2\tau\sigma}{\sqrt{n\tau^2 + 4\sigma^2}}$,
    the assisted human decision even exhibits the reverse of the usual trade-off, where the exclusion of group information increases disparity and accuracy.

    \section{Main results}\label{sec:mainresults}
        
    In this section, we analyze general patterns in the disparities (and accuracy) in the model from \autoref{sec:setup}.
    As a baseline, we first consider the direct implementation of the predictions $\hat{f}_+,\hat{f}_-$ that vary by their use of group=specific information.
    For our main results, we then consider decisions $\hat{d}_+,\hat{d}_-$ by the human decision-maker using these different predictions as input.
    Throughout, we hold the $X$ covariate fixed, $X=x$, as we evaluate the decision-rule, which is without loss in our framework.
    
    If we directly implemented the machine predictions $\hat{f}_+(x,g)$ (which varies by group) and $\hat{f}_-(x)$ (which ignores group membership), we note that disparities would trivially be
    \begin{align*}
        \Delta_{\hat{f}_-}(x) &= 0, &
        \E[\Delta_{\hat{f}_+}(x)] &= \Delta_\mu(x)
    \end{align*}
    (with expectations over the training data), and
    \begin{align}
    \label{eqn:machinedisparity}
        |\Delta_{\hat{f}_+}(x)| > |\Delta_{\hat{f}_-}(x)| = 0
    \end{align}
    almost surely, since $\hat{f}_+$ can vary by group, while $\hat{f}_-$ does not.
    Often, these differences in discrepancy are seen as one side of a trade-off with accuracy, where the more disparate rule is also more accurate; we note that if we take the perspective that $\hat{f}_-,\hat{f}_+$ are learned from training data, then there is not necessarily a trade-off:
    
    \begin{restatable}[Trade-off vs dominance regimes for machine decisions]{rem}{remtradeoffvsdominance}
    \label{rem:tradeoffvsdominance}
        For every $x$ and sample sizes $n(x,1),n(x,0) > 0$
        there exists an $\xi > 0$ such that:
        \begin{enumerate}
            \item \ul{Trade-off regime.} If $\left|\Delta_\mu(x)\right| > \xi$, then
            \begin{align*}
                \left|\Delta_{\hat{f}_+}(x)\right| &> |\Delta_{\hat{f}_-}(x)| \text{ a.s.},
                &
                \E[r_{\hat{f}_+}(x)] & < \E[r_{\hat{f}_-}(x)].
            \end{align*}
            \item \ul{Dominance regime.} If $ \left|\Delta_\mu(x)\right| < \xi$, then
            \begin{align*}
                \left|\Delta_{\hat{f}_+}(x)\right| &> |\Delta_{\hat{f}_-}(x)| \text{ a.s.},
                &
                \E[r_{\hat{f}_+}(x)] & > \E[r_{\hat{f}_-}(x)].
            \end{align*}
        \end{enumerate}
    \end{restatable}

    In other words, when the true disparity in outcomes between the two groups is small, then there is no trade-off between increasing accuracy and increasing disparities; in that case, ignoring group identity and learning jointly serves as a form of regularization that improves predictions in this simple learning framework.
    
    We now turn to human decisions with machine assistance, where we investigate the interplay between decision-maker biases and the nature of machine assistance.
    Specifically, we assume that the decision-maker, while aiming to maximize accuracy, may have biased beliefs about the relative means of the two groups, which we express by (excess) disparity in their prior.
    To formalize our assumption, it will be helpful to define as
    \begin{align*}
        \bar{\mu}(x) = \frac{n(x,1) \mu(x,1) + n(x,0) \mu(x,0)}{n(x,1) + n(x,0)}
    \end{align*}
    the average of instances with $X=x$ in the training data.
    
    \begin{restatable}[$\delta$-disparate beliefs]{ass}{assdeltadis}
        The decision-maker's belief $\pi$ about means at $X=x$ assumes that there is a disparity of at least $\delta > 0$ between groups $G=1$ and $G=0$ with all else known,
        \begin{align*}
            \E_\pi[\mu(x,1) - \mu(x,0) | \bar{\mu}(x)] \geq \delta.
        \end{align*}
    \end{restatable}
    
    In this assumption, we condition on the average $\bar{\mu}(x)$ to rule out cases where beliefs about the difference $\mu(x,1) - \mu(x,0)$ in means are overwhelmed by updates about the average $\bar{\mu}(x)$.
    We also make regularity assumptions about the prior, in addition to Normal error terms in \autoref{sec:setup}.
    
    \begin{restatable}[Finite prior moments]{ass}{assfinitesec}
        For all $(x,g)$, $\E_\pi[|\mu(x,g)|] < \infty$.
    \end{restatable}
    
    Under these assumptions, we $\pi$-almost surely obtain a reversal in disparities of $\hat{d}_-$ and $\hat{d}_+$, relative to $\hat{f}_-$ and $\hat{f}_+$:
    
    \begin{restatable}[Disparity reversal]{thm}{thmdisreversal}
    \label{thm:disreversal}
        Assume that the decision-maker has $\delta$-disparate beliefs, that the regularity conditions hold, and that $0\leq\Delta_\mu(x) < \delta$.
        Then $\pi$-almost surely for every $\eta > 0$ there exists some $N$
        such that with probability (over draws of the training data) at least $1 - \eta$ we have 
        \begin{align*}
            \Delta_{\hat{d}_+}(x) < \delta \leq \Delta_{\hat{d}_-}(x), \Delta_{d_0}(x)
        \end{align*}
         whenever  $n(x,1),n(x,0) \geq N$.
    \end{restatable}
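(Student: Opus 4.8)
The plan is to fix $X=x$ throughout (suppressing it) and write $\mu_g=\mu(x,g)$, $n_g=n(x,g)$, $\Delta_\mu=\mu_1-\mu_0$, and $\bar\mu=(n_1\mu_1+n_0\mu_0)/(n_1+n_0)$ for the training-data average at $x$. The statement splits into three comparisons: $\delta\le\Delta_{d_0}(x)$, $\delta\le\Delta_{\hat{d}_-}(x)$, and $\Delta_{\hat{d}_+}(x)<\delta$. I expect the first two to follow directly from the $\delta$-disparate-beliefs assumption, requiring no large-sample argument at all (indeed they hold for every sample size); the substantive content is the strict inequality for $\hat{d}_+$, which I would get from Bayesian consistency of the posterior mean as $n_1,n_0\to\infty$. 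Combining the three then yields the displayed chain with probability at least $1-\eta$.

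For the unassisted decision, $\Delta_{d_0}(x)=\E_\pi[\mu_1]-\E_\pi[\mu_0]=\E_\pi[\E_\pi[\mu_1-\mu_0\mid\bar\mu]]$, so the disparate-beliefs assumption and the tower property give $\Delta_{d_0}(x)\ge\delta$ at once. For the $\hat{f}_-$-assisted decision, the key observation is that under the decision-maker's model the signal $\hat{f}_-(x)$ given the means is $\N(\bar\mu,\sigma^2/(n_1+n_0))$ --- here the Normal-error and constant-variance assumptions from the setup enter --- so its conditional law depends on $(\mu_1,\mu_0)$ only through $\bar\mu$, and hence $\hat{f}_-(x)$ is conditionally independent of $\Delta_\mu$ given $\bar\mu$. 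Therefore $\E_\pi[\Delta_\mu\mid\bar\mu,\hat{f}_-(x)]=\E_\pi[\Delta_\mu\mid\bar\mu]\ge\delta$, and conditioning once more on $\hat{f}_-(x)$ yields $\Delta_{\hat{d}_-}(x)=\E_\pi[\Delta_\mu\mid\hat{f}_-(x)]\ge\delta$ for (almost) every realized signal value. This is exactly where conditioning on $\bar\mu$ in the assumption is used: it keeps the update $\hat{f}_-(x)$ induces about the level $\bar\mu$ from spilling over and shrinking the belief about the difference.

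The main step is $\Delta_{\hat{d}_+}(x)<\delta$, which I would reduce to showing that, for $\pi$-almost every $(\mu_1,\mu_0)$, the posterior mean $\hat{d}_+(x,g)=\E_\pi[\mu_g\mid\hat{f}_+(x,g)]$ converges in probability over the training data to $\mu_g$ as $n_g\to\infty$, for each $g$. The crucial point is that, since the errors are Normal with known variance, $\hat{f}_+(x,g)$ is a sufficient statistic for $\mu_g$ given the observations with $X_i=x,G_i=g$, so $\E_\pi[\mu_g\mid\hat{f}_+(x,g)]=\E_\pi[\mu_g\mid\mathcal F_{n_g}]$, where $\mathcal F_n$ is the $\sigma$-field generated by the first $n$ such observations. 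This is a Doob martingale in $n$; finiteness of the prior first moment $\E_\pi[|\mu_g|]$ makes it uniformly integrable, so it converges almost surely (under the joint law of means and data) to $\E_\pi[\mu_g\mid\mathcal F_\infty]$, and the strong law of large numbers forces the sample mean, hence $\mu_g$, to be $\mathcal F_\infty$-measurable, so the limit is $\mu_g$. Disintegrating this joint-law statement over the prior (Fubini) gives the convergence for $\pi$-almost every $(\mu_1,\mu_0)$, which is the source of the $\pi$-almost-surely qualifier. To close, given $\eta>0$ I put $\epsilon=(\delta-\Delta_\mu)/3>0$ (using $\Delta_\mu<\delta$), use convergence in probability to choose $N_g$ with $\P(|\hat{d}_+(x,g)-\mu_g|>\epsilon)<\eta/2$ for all $n_g\ge N_g$, set $N=\max\{N_0,N_1\}$, and observe that with probability at least $1-\eta$ both posterior means are within $\epsilon$ of the truth, so $\Delta_{\hat{d}_+}(x)=\hat{d}_+(x,1)-\hat{d}_+(x,0)\le\Delta_\mu+2\epsilon<\delta$; with the two inequalities above this finishes the proof.

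The step I expect to be the main obstacle is this posterior-mean consistency for $\hat{d}_+$ under the weak regularity imposed here, in particular without assuming the prior admits a density or has thin tails: the sufficiency reduction to a uniformly integrable Doob martingale is what makes it clean, whereas a direct Laplace-type analysis of the posterior integral would require regularity of the prior at the true parameter, which is the kind of assumption one wants to avoid and is precisely why the conclusion can only be asserted $\pi$-almost surely. A minor technical point to check is that the $\pi$-null exceptional sets appearing in the $\hat{d}_-$ and $d_0$ arguments --- where the conditional expectations are defined only up to null sets --- are also negligible under the true sampling law, which holds because the relevant laws are dominated by Lebesgue measure.
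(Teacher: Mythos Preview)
Your proposal is correct and follows essentially the same route as the paper: the tower property plus the $\delta$-disparate-beliefs assumption for $d_0$ and $\hat d_-$ (the paper writes $\bar Y=\bar\mu+\bar\varepsilon$ and uses that $\bar\varepsilon$ is uninformative, which is the same conditional-independence observation you make), and Doob's posterior consistency for $\hat d_+$ via sufficiency of $\bar Y_g$. Your write-up is somewhat more explicit---you spell out the Doob martingale argument and the disintegration that produces the $\pi$-almost-sure qualifier, whereas the paper simply cites Doob's theorem---but the structure and ingredients are the same.
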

    
    The statement holds for a set of true means $\mu$ that have prior probability one, which may exclude areas of the parameter space that the human decision-maker rules out ex-ante. Specifically, the conclusion of the theorem may fail to hold when the decision-maker has a dogmatic prior that cannot be overcome by the data.
    
    The intuition behind this result is straightforward: If the decision-maker is biased in the sense that they overestimate the disparity relative to the data, then a prediction that does not vary by group preserves that disparity, while separate predictions help overcome it. Relative to the machine baseline in \eqref{eqn:machinedisparity}, the disparity is reversed; including protected characteristics increases disparities in automation, while reducing bias relative to the no-assistance and $\hat{f}_-$-assisted decisions of the human decision-maker.
    
    \begin{restatable}[Disparity reordering]{cor}{cordispreord}
    \label{cor:dispreord}
        Under the conditions of \autoref{thm:disreversal} with $\Delta_\mu(x) \geq 0$, with probability at least $1-\eta$
        \begin{align*}
        \left.
                \begin{array}{c}
                |\Delta_{\hat{d}_-}(x)| \\
              |\Delta_{\hat{d}_0}(x)| \\
            \end{array} \right\}
          >
          \left\{
         \begin{array}{c}
         |\Delta_{\hat{d}_+}(x)| \\
              |\Delta_{\hat{f}_+}(x)|
         \end{array}
         \right\}
            > |\Delta_{\hat{f}_-}(x)|.
        \end{align*}
    \end{restatable}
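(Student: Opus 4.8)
The chain will be assembled from four facts, three of which are already in hand: the top two entries are controlled by \autoref{thm:disreversal} together with the $\delta$-disparate beliefs assumption, the bottom entry is $|\Delta_{\hat f_-}(x)| = 0$ from \eqref{eqn:machinedisparity}, and the two middle entries are pinned strictly between $0$ and $\delta$ by a law-of-large-numbers argument. I would fix $\eta > 0$ and, wherever an event of probability at least $1 - \eta/4$ is needed, enlarge the sample-size threshold $N$ from \autoref{thm:disreversal} accordingly, so that the final intersection of the (at most four) events still has probability at least $1 - \eta$.

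First, the top of the chain. Since $d_0$ does not depend on the training data, $\Delta_{d_0}(x) = \E_\pi[\mu(x,1) - \mu(x,0)]$, and applying the tower property to the $\delta$-disparate beliefs assumption (taking the expectation over $\bar\mu(x)$) gives $\Delta_{d_0}(x) \geq \delta > 0$; hence $|\Delta_{d_0}(x)| = \Delta_{d_0}(x) \geq \delta$ deterministically. By \autoref{thm:disreversal}, $\pi$-almost surely there is an $N$ such that for $n(x,1), n(x,0) \geq N$ there is an event $A$ of probability at least $1 - \eta/4$ on which $\Delta_{\hat d_+}(x) < \delta \leq \Delta_{\hat d_-}(x)$; on $A$, $|\Delta_{\hat d_-}(x)| = \Delta_{\hat d_-}(x) \geq \delta$.

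Next, the bottom and the two middle entries. From \eqref{eqn:machinedisparity}, $|\Delta_{\hat f_-}(x)| = 0$ and $|\Delta_{\hat f_+}(x)| > 0$ almost surely, the latter because $\Delta_{\hat f_+}(x) = \hat f_+(x,1) - \hat f_+(x,0)$ is the difference of two independent nondegenerate Normal averages and hence continuously distributed. By the strong law, $\Delta_{\hat f_+}(x) \asconv \Delta_\mu(x)$, and since $0 \le \Delta_\mu(x) < \delta$, enlarging $N$ gives $|\Delta_{\hat f_+}(x)| < \delta$ on an event of probability at least $1 - \eta/4$. For $\hat d_+$ I would reuse the ``data overwhelms the prior'' step from the proof of \autoref{thm:disreversal}: under the finite-prior-moment assumption the posterior mean $\hat d_+(x,g)$ tracks the sample mean $\hat f_+(x,g)$ as $n(x,g) \to \infty$, so $\Delta_{\hat d_+}(x)$ is, with probability at least $1-\eta/4$, close to $\Delta_\mu(x) \in [0,\delta)$; this yields $\Delta_{\hat d_+}(x) > -\delta$, which combined with $\Delta_{\hat d_+}(x) < \delta$ on $A$ gives $|\Delta_{\hat d_+}(x)| < \delta$, and when $\Delta_\mu(x) > 0$ it also gives $\Delta_{\hat d_+}(x) > 0$. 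In the boundary case $\Delta_\mu(x) = 0$ one still has $\Delta_{\hat d_+}(x) \neq 0$ almost surely, since $\hat d_+(x,1) - \hat d_+(x,0)$ is a nonconstant measurable function of the continuously distributed pair $(\hat f_+(x,1), \hat f_+(x,0))$ whose zero set has measure zero under a nondegenerate prior; either way $|\Delta_{\hat d_+}(x)| > 0$ with probability at least $1 - \eta/4$.

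Finally, on the intersection of these events, which has probability at least $1 - \eta$, we simultaneously have $|\Delta_{\hat d_-}(x)|, |\Delta_{d_0}(x)| \geq \delta$, while $|\Delta_{\hat d_+}(x)|, |\Delta_{\hat f_+}(x)| < \delta$ and $|\Delta_{\hat d_+}(x)|, |\Delta_{\hat f_+}(x)| > 0 = |\Delta_{\hat f_-}(x)|$, which is exactly the asserted reordering. I expect the only real friction to be the $\hat d_+$ step: unlike $\hat f_+$, the quantity $\Delta_{\hat d_+}(x)$ is defined only implicitly through the prior, so the lower bounds $\Delta_{\hat d_+}(x) > -\delta$ and $\Delta_{\hat d_+}(x) \neq 0$ (the latter mattering precisely when $\Delta_\mu(x) = 0$) require borrowing the asymptotic posterior-concentration control from \autoref{thm:disreversal} plus a non-degeneracy observation about the prior, rather than a direct computation.
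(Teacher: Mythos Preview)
Your proposal is correct and follows the same route as the paper: combine the bounds $\Delta_{\hat d_-}(x),\Delta_{d_0}(x)\geq\delta$ and $\Delta_{\hat d_+}(x)\stackrel{p}{\to}\Delta_\mu(x)$ from the proof of \autoref{thm:disreversal} with $\Delta_{\hat f_+}(x)\stackrel{p}{\to}\Delta_\mu(x)$ and $\Delta_{\hat f_-}(x)\equiv 0$. The paper's own proof is a two-line sketch that leaves implicit both your explicit $\eta/4$ union-bound bookkeeping and the $\Delta_\mu(x)=0$ boundary case for $|\Delta_{\hat d_+}(x)|>0$; your version is strictly more careful on those points.
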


    When there is a trade-off between disparity and accuracy in the case where the machine prediction is directly implemented,
    then this trade-off is eliminated in large samples as we move from automation to assistance.
    
    \begin{restatable}[Trade-off reversal]{thm}{thmtradeoffrev}
        Assume that the decision-maker has $\delta$-disparate beliefs, that the regularity conditions hold, and that $0 < \Delta_\mu(x) < \delta$.
        Then $\pi$-almost surely for every $\eta > 0$ and $\zeta \in \left(0,\frac{1}{2}\right]$ there exists some $M$
        such that with probability (over draws of the training data) at least $1 - \eta$ we have that 
        \begin{align*}
            \Delta_{\hat{d}_+}(x) &< \Delta_{\hat{d}_-}(x)
            &
            &\text{ and}
            &
            r_{\hat{d}_+}(x) &< r_{\hat{d}_-}(x)
        \end{align*}
        while
        \begin{align*}
            \Delta_{\hat{f}_+}(x) &> \Delta_{\hat{f}_-}(x)
            &
            &\text{ and}
            &
            r^0_{\hat{f}_+}(x,g) &< r^0_{\hat{f}_-}(x,g) \text{ for } g \in \{0,1\}
        \end{align*}
        whenever $\zeta \leq \frac{n(x,0)}{n(x,0) + n(x,1)}, \frac{n(x,1)}{n(x,0) + n(x,1)} \leq 1 - \zeta$ and $n(x,0) + n(x,1) \geq M.$
    \label{thm:trade_reversal}
    \end{restatable}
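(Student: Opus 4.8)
I treat the four inequalities one at a time: the two ``assistance'' statements follow from \autoref{thm:disreversal} together with a Bayesian-consistency fact, while the two ``automation'' statements follow directly from concentration of the group averages. Fix $\eta>0$ and $\zeta\in(0,\tfrac{1}{2}]$, and write $n=n(x,0)+n(x,1)$. I work throughout on the prior-probability-one set of true mean vectors on which the conclusion of \autoref{thm:disreversal} holds and on which a Normal signal of vanishing variance drives the posterior mean under $\pi$ to the true coordinate (the regularity already invoked for \autoref{thm:disreversal}). At the end I set $M$ to the largest of the finitely many sample-size thresholds obtained below and intersect the corresponding events, allocating $\eta/3$ to each of: the joint concentration of $\hat{f}_-(x),\hat{f}_+(x,\cdot)$; the \autoref{thm:disreversal} event; and the consistency of $\hat{d}_+(x,\cdot)$.

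\textbf{Automation.} As $\hat{f}_-(x)$ is constant in $g$, $\Delta_{\hat{f}_-}(x)=0$ identically, and $r^0_{\hat{f}_\pm}(x,g)=(\hat{f}_\pm(x,g)-\mu(x,g))^2+\sigma^2$ (with $\hat{f}_-(x,g):=\hat{f}_-(x)$). Since the errors are Normal with variance $\sigma^2$, for any $\epsilon>0$ a Chebyshev bound yields an $M_1$ so that, once $n\ge M_1$ and both shares $n(x,g)/n$ lie in $[\zeta,1-\zeta]$, with probability $\ge1-\eta/3$ one has $|\hat{f}_+(x,g)-\mu(x,g)|<\epsilon$ for $g\in\{0,1\}$ and $|\hat{f}_-(x)-\bar{\mu}(x)|<\epsilon$. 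The identities $\bar{\mu}(x)-\mu(x,1)=-\tfrac{n(x,0)}{n}\Delta_\mu(x)$ and $\bar{\mu}(x)-\mu(x,0)=\tfrac{n(x,1)}{n}\Delta_\mu(x)$, with $\zeta\le n(x,g)/n$ and $\Delta_\mu(x)>0$, make each gap at least $\zeta\Delta_\mu(x)>0$; taking $\epsilon<\tfrac{1}{2}\zeta\Delta_\mu(x)$ then gives, on this event, $\Delta_{\hat{f}_+}(x)>\Delta_\mu(x)-2\epsilon>0=\Delta_{\hat{f}_-}(x)$ and $r^0_{\hat{f}_+}(x,g)<\epsilon^2+\sigma^2<(\zeta\Delta_\mu(x)-\epsilon)^2+\sigma^2<r^0_{\hat{f}_-}(x,g)$ for both $g$.

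\textbf{Assistance.} Because $0\le\Delta_\mu(x)<\delta$, \autoref{thm:disreversal} supplies an $N$ with: on an event of probability $\ge1-\eta/3$, $\Delta_{\hat{d}_+}(x)<\delta\le\Delta_{\hat{d}_-}(x)$ whenever $n(x,1),n(x,0)\ge N$ — in particular whenever $n\ge N/\zeta$ — which already yields $\Delta_{\hat{d}_+}(x)<\Delta_{\hat{d}_-}(x)$. For the risk, write $r_{\hat{d}}(x)=\E_{G\mid X=x}\bigl[(\hat{d}(x,G)-\mu(x,G))^2\bigr]+\sigma^2$. On that same event, the deterministic relation $(\hat{d}_-(x,1)-\mu(x,1))-(\hat{d}_-(x,0)-\mu(x,0))=\Delta_{\hat{d}_-}(x)-\Delta_\mu(x)\ge\delta-\Delta_\mu(x)>0$ forces $\max_g|\hat{d}_-(x,g)-\mu(x,g)|\ge\tfrac{1}{2}(\delta-\Delta_\mu(x))$, hence $r_{\hat{d}_-}(x)\ge\sigma^2+c$ with $c:=p_0\tfrac{1}{4}(\delta-\Delta_\mu(x))^2>0$ and $p_0:=\min_g\P(G=g\mid X=x)>0$. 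Conversely, conditional on the sample sizes $\hat{f}_+(x,g)\sim\N(\mu(x,g),\sigma^2/n(x,g))$ with $n(x,g)\ge\zeta n\to\infty$, so Bayesian consistency gives $\hat{d}_+(x,g)=\E_\pi[\mu(x,g)\mid\hat{f}_+(x,g)]\to\mu(x,g)$ in probability for both $g$; thus for a suitable $M_2$, on an event of probability $\ge1-\eta/3$ one has $|\hat{d}_+(x,g)-\mu(x,g)|<\sqrt{c}/2$ for both $g$, so $r_{\hat{d}_+}(x)<\sigma^2+c\le r_{\hat{d}_-}(x)$.

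\textbf{Main obstacle.} The one genuinely new step is the lower bound $r_{\hat{d}_-}(x)\ge\sigma^2+c$ — converting the belief-induced disparity surplus $\Delta_{\hat{d}_-}(x)\ge\delta>\Delta_\mu(x)$ into a strictly positive, configuration-uniform excess risk. This needs both groups to carry positive conditional mass at $x$ (so the mismatch is actually charged in $r_{\hat{d}_-}$), and it is the $\zeta$-balance hypothesis that keeps the constant from collapsing: it is what pins $\bar{\mu}(x)$ away from $\mu(x,0),\mu(x,1)$ in the automation part and keeps $n(x,g)\to\infty$ for both $g$ throughout. A secondary point of care is that \autoref{thm:disreversal} is phrased for the difference $\hat{d}_+(x,1)-\hat{d}_+(x,0)$, whereas here I need coordinatewise consistency $\hat{d}_+(x,g)\to\mu(x,g)$ to get $r_{\hat{d}_+}(x)\to\sigma^2$; this is the same vanishing-variance Normal-signal argument applied to each coordinate rather than their difference.
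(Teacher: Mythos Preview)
Your proposal is correct and follows essentially the same route as the paper's own proof: concentration of the group averages for the automation inequalities, \autoref{thm:disreversal} plus coordinatewise posterior consistency of $\hat{d}_+(x,g)$ for the assistance inequalities, and a strictly positive lower bound on $r_{\hat{d}_-}(x)-\sigma^2$ extracted from the disparity surplus $\Delta_{\hat{d}_-}(x)-\Delta_\mu(x)\ge\delta-\Delta_\mu(x)$. The only cosmetic difference is in that last step, where you bound the excess risk by $p_0\cdot\tfrac{1}{4}(\delta-\Delta_\mu(x))^2$ via $\max_g|\hat{d}_-(x,g)-\mu(x,g)|\ge\tfrac{1}{2}(\delta-\Delta_\mu(x))$, whereas the paper uses the variance decomposition to get $\P(G{=}1\mid X{=}x)\P(G{=}0\mid X{=}x)(\delta-\Delta_\mu(x))^2$; both constants are strictly positive under the same implicit assumption that both groups have positive mass at $x$, which you flag explicitly and the paper leaves tacit.
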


    Hence, not only does incorporating the biased human decision-maker reverse the intuition around the effect of the inclusion of group information in prediction functions.
    Taking into account the interaction of prediction and human bias also negates the usual trade-off between accuracy and disparity.

    \section{Experiment (preliminary and incomplete)}
    \label{sec:experiment}

    Above, we analyzed machine-assisted human decisions in a simple model of statistical decision-making.
    We now describe a lab experiment that evaluates empirically how machine-assisted human decisions vary with the type of information presented to decision-makers.
    In the experiment, study subjects were asked to evaluate the math performance of test-takers based on demographic information as well as an algorithmic assistant that provided summary information from a training sample of similar test-takers.
    By randomly varying whether the summary information was gender-blind or gender-aware, we are able to analyze the causal effect of the inclusion of gender information on evaluations by our study subjects.
    Preliminary results of our experiment show that evaluations informed by gender-specific information can reduce average gender disparities in human decisions relative to human decisions informed by a gender-blind algorithm.

    \subsection{Experimental design}

    In our experiment, we asked participants to predict the performance of multiple test-takers on a math task.%
    \footnote{The preregistration of the experiment can be found at \href{https://www.socialscienceregistry.org/trials/10416}{socialscienceregistry.org/trials/10416}. The experiment was approved by the IRB at Columbia (approval number AAAU3601) and Stanford (approval number 66199).}
    For each test-taker, the study participants saw (i) test-taker characteristics (age, gender, education) as well as (ii) assistance in the form of an average of the performance of other test-takers with similar characteristics. The main intervention was variation over which group the performance of previous test-takers was averaged to create the assistant that study participants received. In the treatment conditions, averages were formed separately by gender (we consider the self-identified gender of the test-takers as the `protected characteristic'). In the control conditions, averages were taken jointly across genders.
    Participants looked at a series of up to twelve randomly ordered profiles of math test takers from a previous study and evaluated their performance on the test. Participants were rewarded based on the accuracy of their evaluations. Participants were recruited from the Prolific crowdsourcing platform and a total of $1,241$ participants took part in the experiment, making $14,248$ guesses. 
    
    We first describe the population of test-takers used in the experiment and then describe the treatments of the main experiment and outcomes of interest.

    \paragraph{Population of test-takers.}
    The test-takers were taken from a separate experiment in which participants were asked six math questions in addition to other demographic information such as the test-taker's gender, age, education, and income.%
    \footnote{Test scores and profile data are from Cecilia Ridgeway of Stanford University and Tamar Kricheli-Katz of Tel Aviv University for their working paper ``Behavioral responses to the changing world of gender.''}
    The original study had a total of 396 participants, with 207 male participants and 189 female participants. \autoref{fig:sumstats} shows the math score (share of correct answers) of math-takers separately by gender, age (below and above $45$), and education (4-year college degree or less). 

  \begin{figure}[h]
        \centering
        \includegraphics[width=0.7\textwidth]{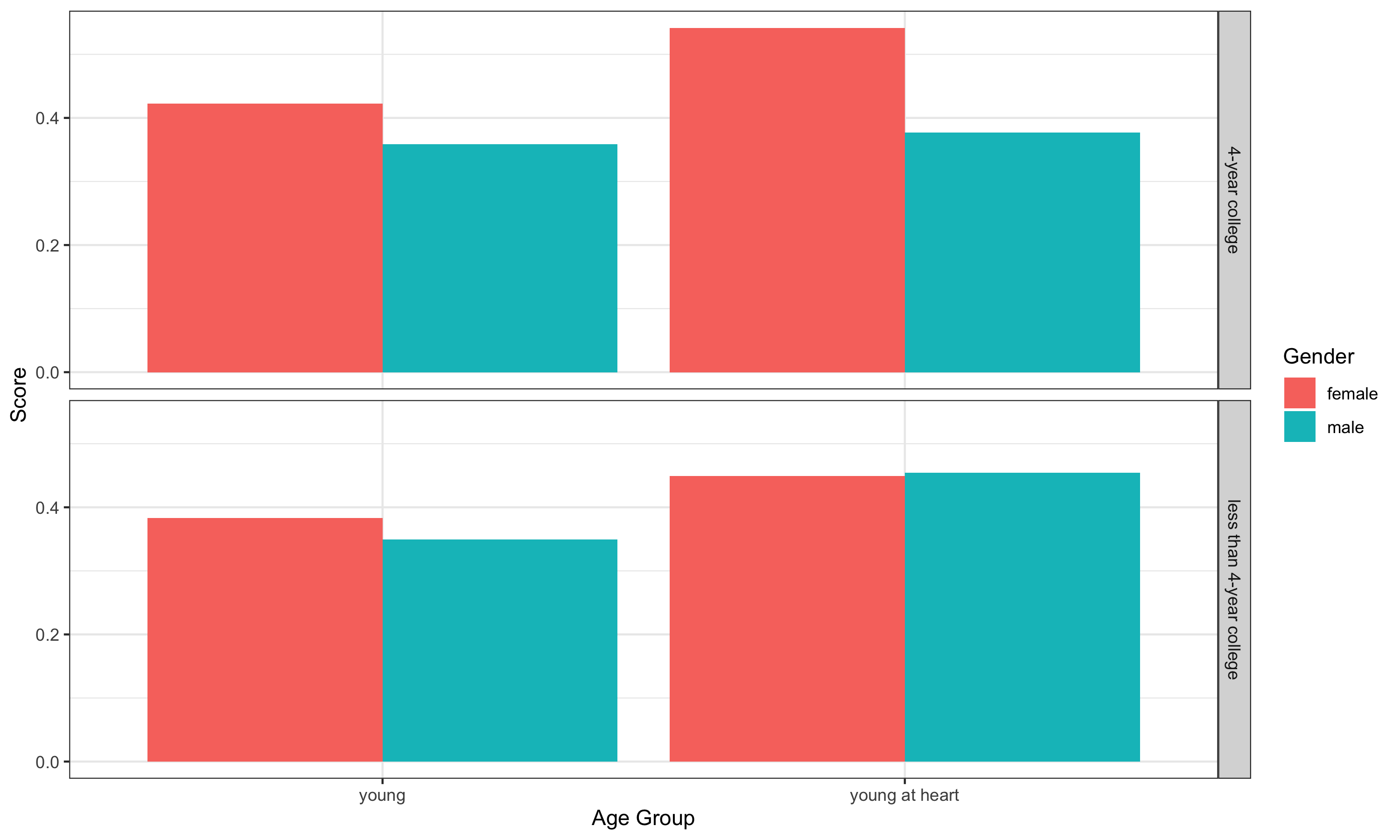}
        \caption{Average math scores (fraction of correct answers, where 1 means all questions were answered correctly) in a data-set of 396 test-takers from a previous study, separated by age (above and below 45), gender (female and male), and education (4-year college degree or less).}
        \label{fig:sumstats}
    \end{figure}
    
    \paragraph{Experimental treatments.}
    All study participants received baseline information about the composition of test-takers and overall average performance. 
    For each test-taker, study participants were shown the test-taker’s gender $g$, and two additional covariates, $x_1$ (age) and $x_2$ (educational attainment). Participants were randomized into one of six conditions, which varied by the subsample of test-takers used to calculate the average provided to the participant, which we refer to as the assistant: 

    \begin{enumerate}
        \item \underline{Condition $0-$}: Participants received average score of all training test-takers. 
        \item \underline{Condition $0+$}: Participants received the average score of test-takers in the training sample who share $g$ with the profile they are evaluating. 
        \item \underline{Condition $1-$}: Participants received the average score of test-takers in the training sample who share the same age range ($x_1$) with the profile they are evaluating. 
        \item \underline{Condition $1+$}: Participants received the average score of test-takers in the training sample who share the same age range ($x_1$) and $g$ with the profile they are evaluating. 
        \item \underline{Condition $2-$}: Participants received the average score of test-takers in the training sample who share the same age range ($x_1$) and educational attainment ($x_2$) with the profile they are evaluating. 
        \item \underline{Condition $2+$}: Participants received the average score of test-takers in the training sample who share the same age range ($x_1$) and educational attainment ($x_2$) and $g$ with the profile they are evaluating. 
    \end{enumerate}
    
    These conditions, therefore, vary by which covariates assistant averages were taken over, $x_1$ (age) and/or $x_2$ (educational attainment), and by whether the assistant is calculated without gender (control) or with gender (treatment). Participants also answered questions about their beliefs in their own prediction abilities and the extent to which they adjusted their predictions in response to the assistant provided. Finally, participants also answered questions about their own gender, age, and education level to allow us to check for in-group biases.  
    
    \paragraph{Outcomes of interest.}
    Our primary outcome of interest is the average evaluations of female test-takers and the average evaluations of male test-takers across the six conditions. As our main test, we compare the difference between the average evaluation of female test-takers and the average evaluation of male test-takers across all treatment conditions (conditions $0+$, $1+$, and $2+$) to the same difference across all control conditions (conditions $0-$, $1-$, and $2-$). Our null hypothesis is that the difference between the average evaluations of female test-takers and male test-takers is not higher in treatment conditions than in control conditions.

\subsection{Preliminary results}

To analyze the differences in evaluations of female test-takers and male test-takers, we consider weighted averages over the evaluation of the 24 different test-taker profiles in our study. The 24 profiles shown to participants consist of a panel of 12 female and 12 male profiles that are balanced with respect to age and education. Because women are less likely to have a 4-year college degree than men in our dataset of test-takers, we oversampled educated women in our profiles selected for the experiment. In analyzing the results, we undo this oversampling by weighting profiles by the share of that group (defined by ten-year age groups, education, and gender) in the population of test-takers. 
We then compare these results to unweighted averages in order to understand the impact of variation in test-taker composition by gender.

Our main results are represented in \autoref{fig:weightedresults}, which shows the average difference in evaluation for women and men (calculated as the average for women minus the average for men, so that a negative difference means that the average was higher for men than women) across different conditions.
For all treatment groups, the difference in the average evaluation for women and men (blue) is negative when the assistant does not include gender, meaning that participants on average predicted that men performed better on the math test than women within the weighted sample.
This negative bias gets attenuated as additional conditioning covariates are added to the assistant averages.
However, when the assistant includes gender, the difference in average evaluations between women and men is indistinguishable from zero across all information sets.
This result stands in contrast to the properties of the algorithmic predictions themselves (red), which are balanced across genders when gender is not included, and show higher averages for women than men when gender is included. This is because women perform better than men on average over the full sample (green).

  \begin{figure}[h]
        \centering
      \includegraphics[width=\textwidth]{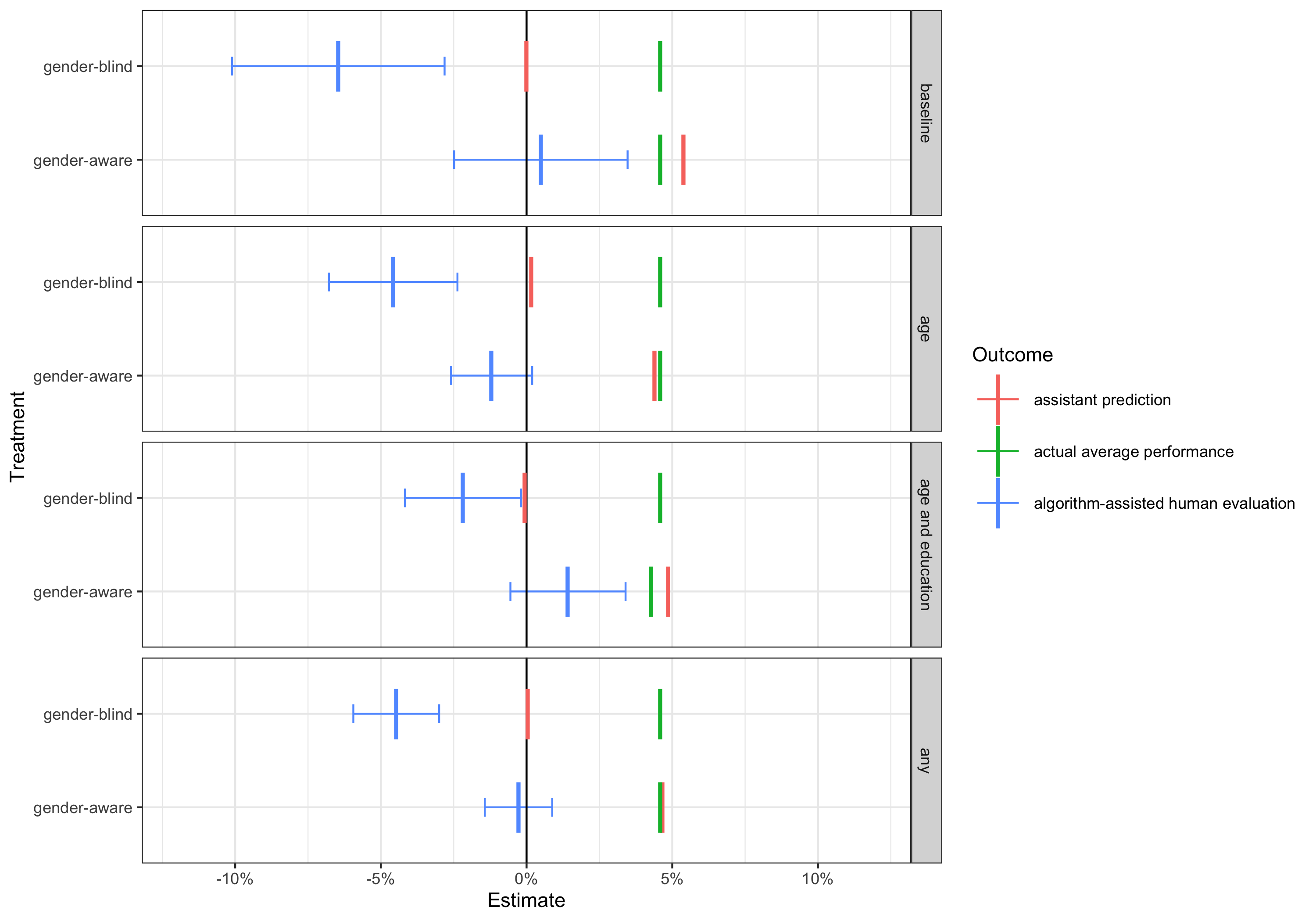}
        \caption{Average evaluation (blue), math score (green), and assistant score (red) based on weighted averages over test-takers in the held-out samples, where weights are chosen according to the share of the respective group in the population of test-takers. The reported estimates represent the percentage-point difference in averages over women minus averages over men, meaning that a positive number indicates that women are evaluated (or perform) relatively higher. Percentages indicate the fraction of correctly answered questions on the math test.}
        \label{fig:weightedresults}
    \end{figure}

We view these results as consistent with our model. When participants received information that was gender-blind, they estimated that women performed worse on the math test than men, averaged over the test-taker distribution. In reality, women performed better on the math test, so when participants received an average that was calculated separately for men and women, they adjusted their performance estimates for women upwards relatively more. This means that disparities of estimates for men and women are lower when information is gender-specific.

 \begin{figure}[h]
        \centering
      \includegraphics[width=\textwidth]{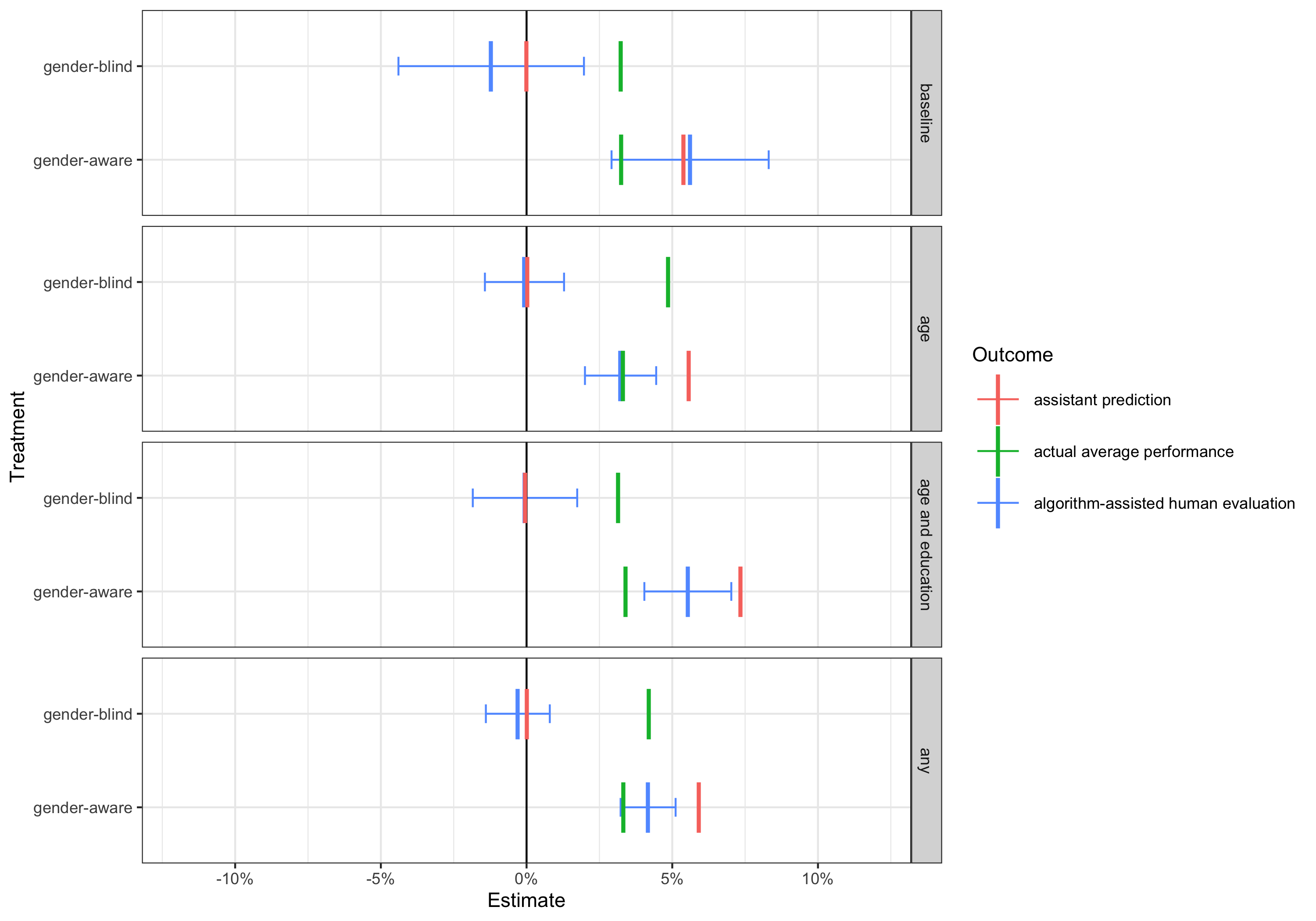}
        \caption{Average evaluation (blue), math score (green), and assistant score (red) based on unweighted average test-taker evaluations of participants, for which profile compositions are artificially balanced across genders. The reported estimates represent the percentage-point difference in averages over women minus averages over men, meaning that a positive number indicates that women are evaluated (or perform) relatively higher. Percentages indicate the fraction of correctly answered questions on the math test.}
        \label{fig:unweightedresults}
    \end{figure}

However, based on our preliminary analysis, these results seem to be driven by participants failing to understand gender differences in the relationship between education, age, and math ability, rather than by participants evaluating female test-takers as worse than men with similar profiles.  \autoref{fig:unweightedresults} shows results for the comparison of raw averages, where profiles are balanced across genders with respect to age and education.
In this balanced sample of test-takers, the study subjects in our study evaluated men and women similarly on average when the assistant was gender-blind, and adjusted their evaluation of women relatively more upward when given gender-specific information. Relative to the true average performance, providing gender-specific assistance, therefore, reduces human bias against women. But in absolute terms, providing gender information increases the average difference between evaluations of women and men with otherwise similar age and education.

Taken together, the weighted and unweighted results on differences in evaluations across genders draw a nuanced picture of the role of including gender in the algorithmic assistant.
While the weighted results are in line with our model prediction, they are driven by an (implicit) misunderstanding of gender differences in the relationship between education and performance, rather than by outright (explicit) bias against women with otherwise similar profiles.
Capturing the results more broadly would require further refinement of the theory that moves beyond conditional parity, and considers implications for unconditional parity when the distribution of characteristics is not balanced across groups.

\section{Implications}
    \label{sec:implications}
    
    In the previous sections, we showed that considering a biased decision-maker who is assisted by a prediction algorithm may reverse the effect of including protected characteristics in the prediction:
    While an algorithm that does not differentiate by protected characteristics may reduce disparities when applied directly (``automation''), excluding group information may not be effective or even counterproductive for reducing disparities when the algorithm provides an input to human decision-making (``assistance'').
    In this section, we consider implications of this result for the debate around the use of protected characteristics in algorithms and the evaluation of algorithmic properties in context.

\subsection{Use of protected characteristics}

    Excluding characteristics from consideration by a machine-learning algorithm -- only in deployment or in training and deployment  -- is sometimes considered a way to assure that people with otherwise similar features are treated similarly.%
    \footnote{A growing literature questions the exclusion of group information because it may be ineffective \citep{gillis_big_2018,gillis_input_2021} or even counterproductive \citep{kleinberg2018algorithmic}. We focus here on the direct effects of exclusion on differences in decisions even in the simple case where exclusion would be effective for automation.}
    Our results show that this is not necessarily true for machine-assisted human decisions and thus provides a potential rationale for including protected characteristics. However, there are other economic, legal, and ethical considerations around the inclusion of protected information that we do not model and may lead to different conclusions, including bias in existing data, biased preferences (as opposed to beliefs) of the agent, dogmatic beliefs that are hard to overcome with data, and limited rationality that prevents the agent from updating towards less biased choices. We review some of these extensions in \autoref{sec:extensions} below.
    
    While our analysis remains limited to the specific case of a human with a biased prior, the formal results clarify that studying the properties of machine-assisted human decisions requires careful consideration of the role of the human decision-maker and the context in which the decision takes place.
    In particular, we document that in a standard model of choice under uncertainty, properties of the prediction function do not directly translate into analogous properties of resulting decisions \citep[similar to the work of][for reporting statistical results to a decision-maker]{andrews2021model}, and that a naive application of existing results may have unintended consequences when applied to machine-assisted human decisions.

    \subsection{Fairness in context}
    
    Our results clarify that determining the fairness, bias, and impact of algorithmic decisions requires going beyond the properties of a prediction rule in isolation.
    Specifically, in the chain from data to prediction to decision (\autoref{fig:context}), biases in the data and properties of prediction rules are not the only factors that shape the properties of the resulting decisions. The process by which data is transformed into a prediction function (the algorithm) and how predictions are used to make decisions (in our case, the human decision-maker with their beliefs) shape the properties of the decision.
    For example, \autoref{rem:tradeoffvsdominance} shows that the training process (in this case, the signal-to-noise ratio in the data and implicit regularization) matters for the relationship between accuracy and disparity. 
    Our main result then shows that the mapping from prediction function to human decisions is non-trivial and may likewise alter commonly assumed trade-offs.
    
        \begin{figure}[h]
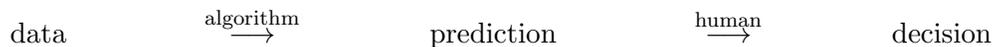

        \centering
        \begin{align*}
        &\text{data}
        &
        &\stackrel{\text{algorithm}}{\longrightarrow}
        &
        &\text{prediction}
        &
        &\stackrel{\text{human}}{\longrightarrow}
        &
        &\text{decision}
    \end{align*}
        \caption{A view on data-driven decisions}
        \label{fig:context}
    \end{figure}
    
    The need to consider properties of algorithmic decisions in their proper context often requires extending current frameworks that narrowly focus on the properties of static decision rules without modeling how they are learned from data or leveraged to make decisions.
    In this article, we instead highlight cases in which an analysis of fairness relies on a view of the whole process by which data and human beliefs are turned into decisions.
    While this framework still falls short of modeling the additional historical, institutional, and dynamic context that may be required to describe equilibrium outcomes and their welfare implications, we hope to highlight the value of analyzing critical properties of algorithmic decisions in their context.

    \section{Extensions}
    \label{sec:extensions}
    
    Our current model remains limited to prediction decisions by a rational decision-maker with biased beliefs, where predictions are provided by simple averages from unbiased training data.
    In this section, we briefly discuss extensions to our approach that aim to make the analysis more complete and applicable.

    \subsection{Correlated covariates}

    Our current model considers the inclusion or exclusion of protected group characteristics.
    However, in many applications, the inclusion of protected class itself may be prohibited or otherwise infeasible.
    In such cases, discussions around inclusions and exclusion may instead center around the role of features that are correlated with protected class, and thus may serve as proxies that lead to disparity in machine predictions.
    In those cases, similar results apply as for the role of protected class itself: excluding close proxies may hurt, rather than help, to reduce disparities when predictions become less informative about true differences across groups.
    
    \subsection{Binary decisions}
        
    Many of the assisted decisions studied in the fairness literature, such as employment, pretrial release, and lending, are binary rather than continuous. To convert between real-valued parameters and binary (or discrete) decisions, thresholds are commonly used. In these cases, we can distinguish between an algorithm's prediction, for example, a defendant's risk score for recidivism, and the decision itself, such as whether to release a defendant on bail. Even when a prediction is used automatically in decision-making, this implicitly assumes the use of a decision rule that translates the prediction to a decision, such as a risk threshold below which defendants are released. 
    We have chosen prediction decisions with a continuous true label for convenience.
    While we do not expect the main takeaways to change when applied to binary decisions based on threshold rules, we see a more complete treatment that formally includes binary decisions as a natural next step in the analysis.
    
    \subsection{Biased preferences}

    Throughout this work, we have assumed that both the decision-maker and algorithm designer are aligned in the goal of minimizing risk. As \cite{kamenica_bayesian_2011} shows, if there are misaligned preferences, the algorithm designer has the potential to improve the decisions taken by designing the structure of the information revealed to the decision-maker. An interesting extension is the optimal design of algorithms for cases where the source of bias is preference misalignment, rather than belief bias.

    \subsection{Biased data, biased equilibria}
    
    In our model, we have assumed that the training data provides unbiased signals of some ground-truth label of interest.
    But when the data itself is biased then the exclusion of protected characteristics may correct for biases and lead to more accurate predictions.
    Relatedly, the data may itself be the result of decisions by algorithms and (biased) decision-makers, leading to biased conclusions that may sustain discriminatory equilibria. Modeling biases in the data itself is therefore an important extension of our current model.
    
    \subsection{Machine learning beyond simple averages, and transparency}
    
    Our model utilizes simple averages to model machine-learning algorithms, which are easy to interpret statistically and can provide a clear answer to the problem of information transfer we wish to solve.
    An important extension of this work is to model more complex regularization schemes.
    Information between types $(x',g')$ and $(x,g)$ could be averaged if they draw from $Y$ from similar distributions. Such data combination could happen adaptively. 
    
    Modeling such an extension faces at least two challenges that we side-step in our current results.
    First, pointwise predictions could then contain information about other parts of the distribution, including direct information about differences between groups.
    Second, more complex algorithms pose the question of transparency and whether the decision-maker in our model would be able to understand the algorithm well enough to update optimally.
    Here, both transparent and intransparent algorithms could have advantages.
    
    \subsection{Noisy decisions}
    
    A fundamental problem when modeling human decision-making is the inconsistency in actions taken under seemingly similar conditions. For example, \cite{kleinberg_human_2018} find that (less noisy) predictions of (noisy) human decisions may outperform the latter in judicial decisions.
    While our current model assumes that decisions given data and features are not stochastic, our model extends naturally to noisy decisions.
    
     \subsection{Alternative notions of fairness, disparities in accuracy}
      
     We have focused so far on conditional statistical parity in predictions as our measure of fairness.
     However, disparities are only one way of expressing unfair decisions; for example, we may also be interested in differences in the accuracy of decisions across groups.
     Discussing fairness measures based on mistakes becomes particularly relevant as we move from prediction to binary decisions.
     For example, \cite{Morgan2019-td} consider equalized odds as a notion of fair treatment by the machine in their model of computer-assisted decision-making.
     The fact that different notions of fairness cannot be achieved simultaneously \citep{kleinberg_inherent_2016,chouldechova_fair_2016} suggests that the conclusions we would draw in our model could depend on the specific measure we consider.
     
    \subsection{Distributional assumptions}
    
    In our model from \autoref{sec:setup}, we currently assume that error terms are Normally distributed with fixed variance, which ensures that the error terms themselves are not informative about the means.
    (The prior belief, on the other hand, does not generally have to be Normal.)
    We conjecture that the assumptions of Normality can be replaced by a more general class of smooth distributions.
    Likewise, we believe that the restriction that the main results hold $\pi$-almost surely can be weakened to holding for any true mean vector inside an open set that has strictly positive prior probability, under regularity assumptions.

    \section{Related literature}
    \label{sec:literature}

    We contribute to an interdisciplinary literature on algorithmic fairness that spans computer science, statistics, economics, law, and operations.
    For example, \cite{kleinberg_inherent_2016,chouldechova_fair_2016} study tensions between different fairness qualities of algorithms. Similarly, \cite{Corbett-Davies2017-bx} shows that `fairness constraints' can be seen to induce the unfair practice of holding different individuals to different standards. \cite{lakkaraju_selective_2017,jiang_identifying_2020} study bias coming from label bias in the training data. \cite{liang2021algorithmic} characterizes the Pareto frontier between fairness and accuracy in an algorithm design problem.
    \cite{Corbett-Davies2018-kr} reviews prevalent measures of fairness and discusses their shortcomings, while \cite{Kleinberg2019-if} shows that algorithm simplicity can lead to inequities.
    A growing set of questions arising in this literature concern the impact different restrictions on algorithmic inputs have on the fairness of the resulting decisions \citep[e.g.][]{barocas2016big,gillis_big_2018,yang_equal_2020,hellman_measuring_2020, fu_fair_2022,kim2022race}, whether there are legal restrictions on algorithmic inputs \citep[e.g.][]{bent_algorithmic_2019,gillis_input_2021}, and the role of the human in overseeing algorithms \citep[e.g.][]{enarsson2022approaching,huq2020right}. 
    Relative to this literature, we focus on the question of how the properties of an underlying prediction algorithm affect decisions when there is a human decision-maker in the loop.

    We also relate to work on the interaction of machine predictions and human decisions \citep[as also review by][]{lai_towards_2021,hemmer_human-ai_2021}.
    Similar to our setup, \cite{bastani2021improving} studies how machine learning can improve human decision-making by learning which recommendations lead to better human decisions in a cooking game.
    \cite{ibrahim_eliciting_2021} considers information flowing from a human to a machine, finding that a signal by the human may be more useful for machine prediction than a human's direct forecast.
    \cite{athey_allocation_2020} consider conditions under which it is optimal to allocate decision authority to a human rather than an algorithm.
    \cite{Raghu2019-xs} considers the allocation of tasks between algorithms and human decision-makers, including the role of algorithms in allocating human effort efficiently.
    \cite{dietvorst_overcoming_2018,stevenson_algorithmic_2019, Ludwig2021-gq} study frictions in the adoption of machine predictions (with the latter two specifically examining criminal risk assessment programs). \cite{boyaci_human_2022,snyder_algorithm_2022} look at how cognitive limitation impacts the way decision-makers use algorithmic assistance.
    \cite{Grgic-Hlaca2019-xc,Fogliato2021-de} runs vignette experiments to analyze the effects of algorithmic advice on judicial decisions.
    \cite{garrett_judging_2018} reviews the use of criminal risk assessments in the United States and find high variance in judge reliance on the tools.
    \cite{de-arteaga_case_2020} considers instances in which humans are able to identify and override incorrect estimates of risk.
    \cite{grimon_impact_2022} finds that human decision-makers focus more on information not considered by an algorithm in a setting where social workers predict child maltreatment, reducing related child hospitalizations as a result.
     \cite{DBLP:conf/aies/LakkarajuB20} looks at how black-box algorithms can be explained in multiple ways, with different explanations causing different uptake of the algorithms' predictions.
    Relative to these approaches, we specifically focus on the interaction of machine predictions with a \textit{biased} human decision-maker.
    
    More specifically, we relate to a literature that explicitly considers fairness properties of machine-assisted human decisions.
    Most closely related to our theoretical results, \cite{Morgan2019-td} analyzes specific notions of discrimination for a machine-aided human classification decision, and show that there may be a trade-off between avoiding discrimination in the underlying machine classification and avoiding discrimination in the human decision, where only trivial cases allow for avoiding both at the same time.
    Relative to their approach, we focus on the role of biases in human beliefs, and our analysis still applies to cases where there are no true differences between groups. On the experimental side, our work is closely related to \cite{green_disparate_2019}, which examines how algorithmic risk assessments impact the performance, evaluation capabilities, and bias exhibited by human decision-makers in an online experiment. We expand upon this study by evaluating how changes to the information used by the assistant alter the human's performance and bias. \cite{Imai2020-ci} studies the effect of providing judges with risk assessment tools in pre-trial decisions and considers the impacts on the fairness of the resulting decisions.
    Unlike that work, we focus on the effect of excluding protected characteristics, rather than the effect of providing advice.
    \cite{rambachan_economic_2020} considers the regulation of human and algorithmic decision-making, including in the case where potentially discriminatory human decision-makers employ predictive algorithms.
    In contrast, we focus on the interaction of biased beliefs with restrictions on algorithms. \cite{casoria_effect_2022} runs a laboratory hiring experiment to study the impact of (imperfect) group identity information on discrimination in the decisions made. \cite{agan_ban_2018} looks at the impact of `ban the box' programs and finds that removing voluntary disclosures of past felonies from job applications increases racial discrimination in who receives callbacks.
    Relative to these studies, we ask how similar issues apply in the context of an algorithmic assistant.
    
    The article is also related to an economics literature on fairness and discrimination, which attempts to identify sources of biased decisions and distinguishes between preference- and belief-based explanations \citep{bordalo_beliefs_2019,bohren_inaccurate_2019,bohren_dynamics_2019,coffman_role_2021,chan_discrimination_2022} as well as interactions that may occur between the two \citep{eyting_why_2022}.  Prior work provides evidence of how inaccurate beliefs can lead to biases in observed decisions. Building on this work, our article considers how information from a machine prediction interacts with inaccurate beliefs.

    Finally, we relate to work on the communication of statistical results, the statistics and optimization of learning optimal decisions, as well as information design.
    Related to our approach, \cite{andrews2021model} models the communication of statistical results to a decision-maker; in the same way, we model the machine-learning algorithm as providing input to the decision-maker's choice.
    Relative to this work, we focus specifically on the relationship of the inclusion of protected characteristics to disparities in decisions when beliefs are biased. Biased information systems have been shown to be optimal when signaling is restricted to coarse (binary over a continuous space) signals \citep{meyer_learning_1991,suen_selfperpetuation_2004}.
    A related literature on forming optimal decisions from training data also emphasizes the distinction between prediction and resulting decision \citep[e.g.][]{bertsimas2020predictive}.
    More specifically, \cite{kleinberg2018algorithmic} argues that fairness constraints should be imposed in the decision, rather than the prediction stage.
    \cite{Canetti2019-fm,Mishler2021-hg} consider whether and how fairness can be achieved through post-processing. 
    Our information structure also resembles that of \cite{kamenica_bayesian_2011}, which demonstrates how an informed sender with control of information can design a signaling scheme that influences the behavior of a decision-making receiver with misaligned preferences.
    
    \section{Conclusion}
    \label{sec:conclusion}
        
    In this article, we present a model in which the fairness implications of a machine-assisted human decision cannot be assessed purely from the mathematical properties of underlying predictions.
    Instead, we argue that the nature of decision authority, beliefs, incentives, existing bias, and equilibria all shape the fairness of complex decision systems.
    When analyzing disparities of algorithmic decisions, this view provides a research agenda towards less discriminatory decisions by designing processes that consider fairness implications in all aspects of their design.
    We believe that this agenda requires bringing together techniques, ideas, scholars, and stakeholders from across fields and across application areas.
       
    \bibliography{humachine}

    \onecolumn
\appendix

\begin{center}
    \Large
    \textsc{Appendix}
\end{center}

\section{Solution for the example}

In this section we present the calculations necessary for all computations in the example given in \autoref{sec:example}.
Here, we assume that:
\begin{itemize}
    \item $Y|G = g \sim \N(\mu(g),\sigma^2)$ is drawn independently with $\Delta_\mu = \mu(1) - \mu(0)$ and $\bar{\mu} = \frac{\mu(1) + \mu(0)}{2}$.
    \item The decision-maker holds a prior belief $\pi$ that $\mu(g) \sim \N(\beta(g),\tau^2)$ independently across $g$, with $\delta = \beta(1) - \beta(0)$ and $\bar{\beta} = \frac{\beta(1) + \beta(0)}{2}$.
    \item $\P(G = 1) = \frac{1}{2}$ independent of everything else.
    \item The training data is balanced, $n(1) =\frac{n}{2} = n(0)$.
\end{itemize}

\subsection{A convenient reparametrization}

Writing $\bar{Y}_1, \bar{Y}_0$ for the respective averages,
we have that
\begin{align*}
    \hat{\Delta} = \frac{\bar{Y}_1 - \bar{Y}_0}{2} &\sim \N(\Delta_\mu / 2, \sigma^2 / n)
    &
    \hat{\mu}=\frac{\bar{Y}_1 + \bar{Y}_0}{2} &\sim \N(\bar{\mu}, \sigma^2 / n),
    \\
    \bar{\Delta} = \Delta_\mu / 2 &\sim \N(\delta/2,\tau^2/2),
    &
    \bar{\mu} &\sim \N(\bar{\beta},\tau^2/2),
\end{align*}
which are all independent.
Furthermore, we can write:
\begin{align*}
    \hat{f}_+(g) &= \hat{\mu} + (2g-1) \hat{\Delta}
    &
    \hat{f}_- &= \hat{\mu}
    &
    \mu(g) &= \bar{\mu} + (2g-1) \bar{\Delta}
\end{align*}

\subsection{Human decisions}

We update as:
\begin{align*}
    d_0(g) = \E_\pi[\mu(g)] &= \beta(g) = \bar{\beta} + \frac{2g-1}{2} \delta \\
    \hat{d}_+(g) = \E_\pi[\mu(g) | \hat{f}_+(g)] &= \frac{n \tau^2 \bar{Y}_g + 2 \sigma^2 \beta(g)}{n \tau^2 + 2 \sigma^2} 
    =
    \frac{n \tau^2 \hat{\mu} + 2 \sigma^2  \bar{\beta}}{n \tau^2 + 2 \sigma^2} 
    +
    \frac{2g-1}{2}
    \frac{n \tau^2 2 \hat{\Delta} +  2 \sigma^2 \delta}{n \tau^2 + 2 \sigma^2} 
    \\
    \hat{d}_-(g) = \E_\pi[\mu(g) | \hat{f}_-] &= \E_\pi[\bar{\mu}|\hat{\mu}] + (2g-1) \E_\pi[\bar{\Delta} | \hat{\mu}]
    = \frac{n \tau^2 \hat{\mu} + 2 \sigma^2 \bar{\beta}}{n \tau^2 + 2 \sigma^2} + \frac{2g-1}{2} \delta
\end{align*}

\subsection{Biases and variances}

By independence, we can calculate biases and variances (given the true values of $\bar{\mu},\bar{\Delta}$) separately by average and disparity to find:
\begin{align*}
    \E[d_0(g)] - \mu(g) &= (\bar{\beta} - \bar{\mu}) + \frac{2g-1}{2} (\delta - 2 \bar{\Delta})
    &
    \Var(d_0(g)) &= 0 \\
    \E[\hat{d}_+(g)] - \mu(g) &= \frac{2 \sigma^2  (\bar{\beta} - \bar{\mu})}{n \tau^2 + 2 \sigma^2} 
    +
    \frac{2g-1}{2}
    \frac{2 \sigma^2 (\delta - 2 \bar{\Delta})}{n \tau^2 + 2 \sigma^2}
    &
    \Var(\hat{d}_+(g)) &= \frac{2 n \tau^4 \sigma^2}{(n\tau^2 + 2 \sigma^2)^2}
    \\
    \E[\hat{d}_-(g)] - \mu(g)
    &= \frac{ 2 \sigma^2 (\bar{\beta}-\bar{\mu})}{n \tau^2 + 2 \sigma^2} + \frac{2g-1}{2} (\delta - 2\bar{\Delta})
    &
    \Var(\hat{d}_-(g))
    &=
    \frac{n \tau^4 \sigma^2}{(n\tau^2 + 2 \sigma^2)^2}
\end{align*}
The results in \autoref{tab:example} follow, noting that $\E\left[\frac{2G-1}{2}\right]=0, \Var\left(\frac{2G-1}{2}\right)=\frac{1}{4}$:

\begin{center}
        \begin{tabular}{rcc}
        \toprule
            $\hat{d}$ & $\E[\Delta_{\hat{d}}]$ & $\E[r_{\hat{d}}] = \E[\E^2[\hat{d}(G) - \mu(G)|G] + \Var(\hat{d}(G)|G)] + \sigma^2$ \\
            \midrule
            $\hat{f}_-$ & $0$ & $\frac{\Delta_\mu^2}{4}+\sigma^2 \left(1+\frac{1}{n}\right)$\\
            \greyrule
            $\hat{f}_+$ & $\Delta_\mu$ & $\sigma^2 \left(1+\frac{2}{n}\right)$\\
             \midrule
            $d_0$ & $\delta$ & $(\bar{\mu} {-} \bar{\beta})^2 + \frac{(\Delta_\mu {-} \delta)^2}{4} + \sigma^2$ \\
            \greyrule
            $\hat{d}_-$ & $\delta$ & $\frac{\sigma^4 (\bar{\mu} {-} \bar{\beta})^2}{(\sigma^2 + \frac{n}{2} \tau^2)^2} + \frac{(\Delta_\mu {-} \delta)^2}{4} 
             + \sigma^2 \left(1 + \frac{n \tau^4}{4(\sigma^2 + \frac {n}{2} \tau^2)^2}\right)$
            \\
            \greyrule
            $\hat{d}_+$ & $\frac{\sigma^2 \delta + \frac{n}{2}\tau^2\Delta_\mu}{\sigma^2  + \frac{n}{2}\tau^2 }$ & $\frac{\sigma^4 \left((\bar{\mu} {-} \bar{\beta})^2 + \frac{(\Delta_\mu {-} \delta)^2}{4} \right)}{\left(\sigma^2 + \frac{n}{2} \tau^2\right)^2}  + \sigma^2 \left(1 + \frac{n \tau^4}{2(\sigma^2 + \frac{n}{2} \tau^2)^2}\right)$ \\
             \bottomrule
        \end{tabular}
\end{center}

\subsection{Proofs of the remarks in the example section}

\remexamplereversal*

\begin{proof}[Proof of \autoref{rem:examplereversal}]
    Immediate from the explicit calculations, where we note that $\E[\Delta_{\hat{d}_+}]$ is a convex combination of $\delta$ and $\Delta_\mu$.
\end{proof}

\remexampletradeoff*

\begin{proof}[Proof of \autoref{rem:exampletradeoff}]
    The results for automation are immediate from the explicit expressions above.
    For assistance we note that risks are equal for
    \begin{align*}
    &
    &
        \left(1 - \frac{\sigma^4}{\left(\sigma^2+\frac{n}{2}\tau^2\right)^2}\right) \frac{(\Delta_\mu - \delta)^2}{4}
        &=
        \frac{n \sigma^2 \tau^4}{4 \left(\sigma^2+\frac{n}{2}\tau^2\right)^2}
        \\
        &\Longleftrightarrow
        &
        \left(n \tau^2 \sigma^2 + n^2 \tau^4 /4 \right) (\Delta_\mu - \delta)^2
        &=
        n \sigma^2 \tau^4
                \\
        &\Longleftrightarrow
        &
        (\Delta_\mu - \delta)^2
        &=
        \frac{4 \sigma^2 \tau^2}{4 \sigma^2 + n \tau^2 },
    \end{align*}
    yielding the threshold for $\delta$.
\end{proof}

\section{Proofs of main results}

\remtradeoffvsdominance*
\begin{proof}
    The statements about disparities are immediate from positive variance and $\Delta_{\hat{f}_-} = 0$.
    Since we assume a constant variance $\sigma^2$, the statement about risks holds by
    \begin{align*}
        \E[r_{\hat{f}_+}(x)] &= \E\left[ \frac{\sigma^2}{n(x,G)}\right] + \sigma^2
        \\
        \E[r_{\hat{f}_-}(x)] &= 
        \E\left[\left(\frac{n(x,1-G)}{n(x,1)+n(x,0)} \Delta_\mu \right)^2\right] +
        \frac{\sigma^2}{n(x,1)+n(x,0)}  + \sigma^2,
    \end{align*}
    where we note that $\E[r_{\hat{f}_+}(x)] - \E[r_{\hat{f}_-}(x)]$ is symmetrical in $\Delta_\mu$ around zero and strictly decreasing in $|\Delta_\mu|$, positive at $\Delta_\mu =0$, and negative as $|\Delta_\mu| \rightarrow \infty$. Hence there exists such $\xi$.
\end{proof}

\thmdisreversal*

\begin{proof}
    By $\delta$-disparate beliefs and for $\bar{Y}_g = \frac{\sum_{X_i=x,G_i=g} Y_i}{\sum_{X_i=x,G_i=g} 1}$, $\bar{Y} = \frac{\sum_{X_i=x} Y_i}{\sum_{X_i=x} 1}$, $\bar{\varepsilon} = \frac{\sum_{X_i=x} \varepsilon_i}{\sum_{X_i=x} 1}$ (where $\varepsilon_i = Y_i - \mu(x,g)$),
    \begin{align*}
        \Delta_{\hat{d}_-}
        &= \E_\pi[\mu(1) | \bar{Y}] - \E_\pi[\mu(0) | \bar{Y}]
        = \E_\pi\left[\mu(1)-\mu(0) \middle| \bar{Y} \right]
        \\
        &= \E_\pi\left[\mu(1)-\mu(0) \middle| \bar{\mu} + \bar{\varepsilon} \right]
        =
        \E_\pi\left[\E_\pi\left[\mu(1)-\mu(0) \middle| \bar{\mu} , \bar{\varepsilon} \right] \middle| \bar{\mu} + \bar{\varepsilon} \right]
        \\
        &=\E_\pi[\underbrace{\E_\pi\left[\mu(1)-\mu(0) \middle| \bar{\mu} \right]}_{\geq \delta} | \bar{\mu} + \bar{\varepsilon} ]
        \geq \delta
    \end{align*}
    almost surely,
    where we have used that the $\varepsilon_i$ do not vary with $G_i$ and are not informative about the $\mu(x,g)$.
    Similarly, $\Delta_{\hat{d}_-} = \E_\pi\left[\mu(1)-\mu(0)\right] = \E_\pi\left[\E_\pi\left[\mu(1)-\mu(0) \middle| \bar{\mu} \right] \right] \geq \delta$.
    At the same time,
    \begin{align*}
        \hat{d}_+(x,g) = \E_\pi[\mu(x,g)|\bar{Y}_g]
        =
        \E_\pi[\mu(x,g)|\{Y_i;G_i=g,X_i=x\}]
        \stackrel{p}{\rightarrow}
        \mu(x,g)
    \end{align*}
    as $n(x,g) \rightarrow \infty$
    $\pi$-almost surely by Doob's posterior consistency theorem \cite[e.g.][Theorem 2.2]{Miller2018-xt}
    where we have used that $\bar{Y}_g$ is a sufficient statistic for $\{Y_i;G_i=g,X_i=x\}$.
    In particular,
    \begin{align*}
        \Delta_{\hat{d}_+}(x) = \hat{d}_+(x,1) - \hat{d}_+(x,0) \stackrel{p}{\rightarrow} \Delta_\mu(x) < \delta
    \end{align*}
    as $\min(n(x,1),n(x,0)) \rightarrow \infty$.
    The result follows.
\end{proof}

\cordispreord*

\begin{proof}[Proof of \autoref{cor:dispreord}]
    The result follows as in the two previous proofs and
    \begin{align*}
        \Delta_{\hat{f}_+}(x,g) = \hat{f}_+(x,1) - \hat{f}_+(x,0)& \stackrel{p}{\rightarrow} \Delta_\mu(x),
        &
        \Delta_{\hat{f}_-}(x,g) &\equiv 0
        .\qedhere
    \end{align*}
\end{proof}

\thmtradeoffrev*

\begin{proof}[Proof of \autoref{thm:trade_reversal}]
    Since $\pi$-almost surely
    \begin{align*}
        \hat{d}_+(x,g) &\stackrel{p}{\rightarrow} \mu(x,g)
        &
        \hat{d}_-(x,1) - \hat{d}_-(x,0)
        \geq \delta
    \end{align*}
    as $n(x,g) \rightarrow \infty$,
    by continuous mapping we have that
    \begin{align*}
        r_{ \hat{d}_+}(x) &\stackrel{p}{\rightarrow} \sigma^2
        &
        r_{ \hat{d}_-}(x)
        & \geq \P(G=1)\P(G=0) (\Delta_\mu - \delta)^2 + \sigma^2 > \sigma^2.
    \end{align*}
    For automation, for all $\epsilon > 0$
    \begin{align*}
        \hat{f}_+(x,g) &\stackrel{p}{\rightarrow} \mu(x,g),
        &
        \P(|\hat{f}_-(x,g) - \mu(x,g)| \geq \zeta \Delta_\mu -\epsilon) \rightarrow 1,
    \end{align*}
    so
        \begin{align*}
        r^0_{ \hat{f}_+}(x,g) &\stackrel{p}{\rightarrow} \sigma^2,
        &
        r^0_{ \hat{f}_-}(x,g)
        & \geq \zeta^2 (\Delta_\mu - \epsilon)^2 + \sigma^2 > \sigma^2
    \end{align*}
    for $\epsilon$ sufficiently small.
\end{proof}

\end{document}